\documentclass[11pt,reqno]{amsart}

% Packages
\usepackage{amsmath, amsthm, amsopn, amssymb, graphicx, enumerate, geometry, afterpage, caption, subcaption, color, textcomp, bbm}

%\usepackage{showlabels}   %%%% uncovered to display equation names

% % Margins
\setlength{\topmargin}{0in}
\setlength{\leftmargin}{0in}
\setlength{\rightmargin}{0in}
\setlength{\evensidemargin}{0in}
\setlength{\oddsidemargin}{0in}

% Text area size
\setlength{\textwidth}{6in}
\setlength{\textheight}{8.5in}

% Definitions

\newtheorem{thm}{Theorem}[section]
\newtheorem{lemma}[thm]{Lemma}
\newtheorem{prop}[thm]{Proposition}

\newtheorem{cor}[thm]{Corollary}

\theoremstyle{definition}

\numberwithin{equation}{section}

\def\={\  =\  }
\def\<{ \  \leq \  }
\def\0{\textbf{0}}

% Abbreviations

\newcommand{\R}{\mathbb{R}}
\newcommand{\Z}{\mathbb{Z}}

\newcommand{\E}{\mathbb{E}}
\renewcommand{\Pr}{\mathbb{P}}

\newcommand{\1}{\mathbbm{1}}

\def\eps{\varepsilon}
\renewcommand\P{\mathbb{P}}

\newcommand{\extB}{\partial_{\text{v}}}

\DeclareMathOperator\zero{\mathbf{0}}

\DeclareMathOperator{\Var}{Var}

\def\={\  = \  }
\def\be{\begin{equation}}
\def\ee{\end{equation}}
\def\ba{\be \begin{aligned}}
\def\ea{\end{aligned} \ee}

\author{Michael Aizenman}
\address{Michael Aizenman\hfill\break
    Departments of Mathematics and Physics,
    Princeton University,
    Princeton, NJ 08544,
    USA.}
\email{aizenman@princeton.edu}

\author{Ron Peled}
\address{Ron Peled\hfill\break
    School of Mathematical Sciences,
    Tel Aviv University,
    Israel.}
\email{peledron@post.tau.ac.il}

\title[On the Decay of Correlations in the
         RFIM]
{On the Decay Rate of Correlations in \\
 the Random Field Ising Model}

\title[Power-law upper bound on the correlations in the $2D$ RFIM]
{A Power-law upper bound on the correlations\\  in the $2D$ random field Ising model}

\dedicatory{\hspace{6cm} \it  To the memory of Joe Imry (1939-2018); \\
\hspace{6cm} insightful about physics, music, and life. }

\date{\today}

\begin{document}
\begin{abstract}
As first asserted by Y. Imry and S-K Ma,  the famed discontinuity of the magnetization  as function of the magnetic field in the two dimensional Ising model is eliminated, for all temperatures, through the addition of quenched random  magnetic field of uniform variance, even if that is small.
This statement is quantified here by a power-law upper bound on the
decay rate of the effect of boundary conditions on the magnetization in finite systems, as function of the distance to  the boundary.  Unlike  exponential decay which is only proven for strong disorder or high temperature, the power-law upper bound is established here for all field strengths and at all temperatures, including zero, for the case of independent Gaussian random field.  Our analysis proceeds through a streamlined and quantified version of the Aizenman-Wehr proof of the Imry-Ma rounding effect.
 \end{abstract}
\maketitle
\vspace{-1cm}
\tableofcontents

\section{Introduction}

\subsection{The Imry-Ma phenomenon for the $2D$ RFIM  and its quantification} \mbox{} \\[-1ex]

A first-order phase transition is one associated with phase coexistence, in which an extensive system admits at least two thermal equilibrium states which differ in their bulk densities of an extensive quantity.   The thermodynamic manifestation of such a transition  is the discontinuity in the derivative of the  extensive system's free energy  with respect to one of the  coupling constants which affect the system's energy.   At zero temperature, this would correspond to the existence of two infinite-volume ground states which differ in the bulk average of a local  quantity.

In what is known as the Imry-Ma ~\cite{IM75} phenomenon, in two-dimensional systems any   first-order transition is  \emph{rounded off}  upon the
introduction of arbitrarily weak  static, or \emph{quenched},  disorder in the parameter which is conjugate to the corresponding extensive quantity.

Our  goal here is to present  quantitive estimates of this effect, strengthening the previously proven infinite-volume statement~\cite{AW90} by:  i)  upper bounds on the dependence of the local density on a finite-volume's  boundary conditions, and ii) related bounds on the correlations among local quenched expectations, which are asymptotically independent functions of the quenched disorder.

The present discussion takes place in the context of the random-field Ising model.  In this case the original discontinuity is in the bulk magnetization, i.e. volume average of the local spin $\sigma_u$, and it occurs at zero magnetic field ($h=0$).  Since $h$ is the conjugate parameter to the magnetization, the relevant disorder for the Imry-Ma phenomenon is given by site-independent random field $(\eps \eta_u)$.
 More explicitly, the system consists of Ising spin variables $\{ \sigma_u\}_{u\in \Z^d}$, associated with the vertices of the $d$-dimensional lattice $\Z^d$,   with the Hamiltonian
\be \label{H}
  H(\sigma) := - \sum_{\substack{\{u,v\}\subseteq\Z^d} }
J_{u,v}\,  \sigma_u \sigma_v  - \sum_{v\in\Z^d} (h+ \eps\,\eta_v) \sigma_v \,,
\ee
and ferromagnetic translation-invariant coupling constants $\mathcal J = \{J_{u,v}\}$ ($J_{u,v} = J_{v,u} = J_{u-v,\zero}\geq 0$).

For convenience we  focus on the case that the $(\eta_v)$ are independent standard Gaussians.  However it is expected, and for many of the key results proven true, that the model's essential features are similar among all independent, identically distributed $(\eta_v)$ whose common distribution has a continuous component. \\

The main result presented here is the proof that in the two-dimensional case  at any temperature $T\ge 0$,  the effect on the local  quenched magnetization of the boundary conditions at distance $L $ away decays  by at least a power law ($1/L^\gamma$).   This may be viewed as a quantitative extension of the   uniqueness of the Gibbs state theorem~\cite{AW89, AW90}.  It also implies a similar bound on correlations within the infinite-volume Gibbs state.
A weaker  upper bound, at the rate  $1/\sqrt{\log \log L}$,   was recently presented in \cite{C17}, derived there by other means. \\

More explicitly:  as the first question it is natural to ask whether the addition of random field terms in the Hamiltonian ~\eqref{H} changes the Ising model's phase diagram, whose salient feature is the phase transition which for $d>1$ occurs at $h=0$   and low enough temperatures, $T< T_c$.  The initial prediction of Y. Imry and S-K Ma \cite{IM75} was challenged by other arguments, however it was eventually proven to be true: For $d\ge 3$ the RFIM continues to have a first-order phase transition at $h=0$~\cite{Imb85,BK88}, whereas in two dimensions at any $\eps \neq 0$ the model's bulk mean magnetization has a unique value for each $h$, and by implication it varies continuously in $h$ at any temperature, including $T=0$ ~\cite{AW89, AW90}.
Through the FKG property~\cite{FKG71} of the RFIM one may also deduce that in two dimensions, at any temperature $T\ge 0$ and
for almost every realization of the random field $\eta = (\eta_v)_{v\in \Z^2}$, the system has a unique Gibbs state. For $T=0$ this translates into uniqueness of the infinite-volume ground state configuration,   i.e.  configuration(s)  for which no flip of a finite number of spins results in lower energy.
Additional background and  pedagogical  review of the RFIM may be found in \cite[Chapter 7]{B06}.

Seeking  quantitative refinements of the above statement, we consider here the dependence
of the \emph{finite-volume} quenched magnetization $\langle\sigma_v\rangle^{\Lambda, \tau}$
on the boundary conditions $\tau$ placed on the exterior of a domain $\Lambda$.
We denote by $\langle-\rangle^{ \Lambda,\tau}$  the finite volume ``$\tau$ state'' quenched thermal average and by $\E{}$ the further average over the random field (both defined explicitly in Section~\ref{sec:Gibbs states}).
Due to the model's FKG monotonicity property the finite volume Gibbs states at arbitrary boundary conditions are bracketed between the $+$ and the $-$ state.
Hence the relevant order parameter is
\begin{equation} \label{eq:m}
  \begin{split}
  m(L) \equiv         m(L;T, \mathcal J, h,  \epsilon)    \,  := \, \frac 1 2 \left[
    \E[\langle\sigma_\0\rangle^{\Lambda(L), +} ] \ - \      \E[\langle\sigma_\0\rangle^{\Lambda(L), -}]
    \right] \,
  \end{split}
\end{equation}
where
\begin{equation}
  \Lambda_u(L):=\{v\in\Z^2\,\colon\, d(u,v)\le L\}\,\quad \mbox{,}\quad \  \Lambda(L) =   \Lambda_\textbf{0}(L) \,,
\end{equation}
with $d(u,v)$  the graph distance on  $\Z^2$ and $\0:=(0,0)$.

\begin{thm}\label{thm:power_law_bound}
In the two-dimensional random-field Ising model with a finite-range interaction $\mathcal J$ and
independent standard Gaussian random field  $(\eta_v)$, for any temperature  $T\geq 0$,  uniform field $h\in \R$,    and field intensity  $\eps>0$ there exist $C=C(\mathcal J, T, \eps) >0$ and $\gamma = \gamma(\mathcal J, T, \eps)>0$ such that for all large enough $L$
  \begin{equation} \label{eq:power_law}
m(L; T , \mathcal J, h, \epsilon) \le \frac{C}{L^{\gamma}} \, .
  \end{equation}
\end{thm}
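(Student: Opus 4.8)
The plan is to run a quantified version of the Aizenman--Wehr free-energy-difference argument. Write $F^{\Lambda,\tau}=-\tfrac{1}{\beta}\log Z^{\Lambda,\tau}$ for the finite-volume free energy with boundary condition $\tau$ (and, at $T=0$, the ground-state energy $\min_\sigma H^{\Lambda,\tau}(\sigma)$ in its place), and set $B_L:=F^{\Lambda(L),+}-F^{\Lambda(L),-}$. Two elementary facts drive everything. First, a \emph{deterministic surface bound}: the Hamiltonians $H^{\Lambda(L),+}$ and $H^{\Lambda(L),-}$ differ only in the couplings across $\partial\Lambda(L)$ -- the bulk field terms $(h+\eps\eta_v)\sigma_v$ being identical -- so $|H^{\Lambda(L),+}(\sigma)-H^{\Lambda(L),-}(\sigma)|\le C|\partial\Lambda(L)|\le CL$ for every $\sigma$, uniformly in $h$ and in $\eta$, whence $|B_L|\le CL$. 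Second, $\partial_{\eta_v}B_L=-\eps\,\mu_v$ with $\mu_v:=\langle\sigma_v\rangle^{\Lambda(L),+}-\langle\sigma_v\rangle^{\Lambda(L),-}\in[0,2]$ by FKG monotonicity, and $\partial_h B_L=-\sum_{v\in\Lambda(L)}\mu_v\le 0$.

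Next I would establish the \emph{fluctuation lower bound} $\Var(B_L)\ge c\,\eps^2 L^2\,m(2L)^2$. Reveal the fields $(\eta_v)_{v\in\Lambda(L)}$ one site at a time and let $D_k$ be the corresponding Doob-martingale increments of $B_L$, so that $\Var(B_L)=\sum_k\E[D_k^2]$. A short computation shows $\eta_{v_k}$ is uncorrelated with $D_j$ for $j\ne k$, while Gaussian integration by parts gives $\E[\eta_{v_k}D_k]=\E[\partial_{\eta_{v_k}}B_L]=-\eps\,\E[\mu_{v_k}]$; Cauchy--Schwarz then yields $\E[D_k^2]\ge\eps^2(\E\mu_{v_k})^2$, hence $\Var(B_L)\ge\eps^2\sum_{v\in\Lambda(L)}(\E\mu_v)^2$. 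Since $\Lambda(L)\subseteq\Lambda_v(2L)$ for every $v\in\Lambda(L)$, FKG domain monotonicity gives $\E\mu_v\ge 2m(2L)$, and the displayed bound follows.

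The heart of the matter is converting these two estimates into a genuine power law. By themselves they give only $m(2L)=O(1)$, because in $d=2$ the surface bound $|B_L|\le CL$ sits exactly on the borderline of $\Var(B_L)\gtrsim\eps^2L^2m(2L)^2$ -- and, symmetrically, the naive variance \emph{upper} bound from the Gaussian Poincar\'e inequality, $\Var(B_L)\le\eps^2\sum_v\E\mu_v^2$, is circular. The extra leverage comes from integrating over the uniform field: from $\partial_h B_L=-\sum_v\mu_v$ and $|B_L(h)|\le CL$ one obtains $\int_{\R}\E\big[\sum_{v\in\Lambda(L)}\mu_v\big](h)\,dh\le 2CL$, hence $\int_{\R}m(2L;h)\,dh\le C'/L$ -- a power-law statement, but only in $L^1$-in-$h$ form. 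Upgrading this to the pointwise bound at the prescribed $h$ calls for quantitative control of the $h$-regularity of $h\mapsto m(L;h)$: its derivative is governed by the disorder-averaged finite-volume susceptibility $\E\chi_L$, whose total $h$-mass is in turn bounded by $2/\beta$ from $|\langle\sigma_0\rangle|\le1$; but a tall narrow spike is not excluded by these integrated constraints alone.

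I therefore expect the genuine obstacle to be a \emph{multi-scale bootstrap}: feed the integrated decay back through the fluctuation inequality run at a whole hierarchy of scales, and use FKG monotonicity (seeded by the qualitative Aizenman--Wehr uniqueness theorem, $m(L)\to0$) to trade integrated control for pointwise control while losing only a power of $L$ at each stage, so that the exponent $\gamma$ -- necessarily small -- emerges from the rate of the iteration. Finally, the $T=0$ case must be carried through with the ground-state energy in place of $F$ and with ground-state stability replacing the susceptibility input; that the random field is Gaussian is what supplies the required almost-sure differentiability in $\eta$.
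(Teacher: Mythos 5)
Your proposal correctly identifies, and cleanly derives, the two ingredients of the original Aizenman--Wehr argument: the deterministic $O(L)$ surface bound on $B_L=F^{\Lambda(L),+}-F^{\Lambda(L),-}$, and the martingale lower bound $\Var(B_L)\ge\eps^2\sum_v(\E\mu_v)^2\gtrsim\eps^2L^2m(2L)^2$. You also correctly diagnose that in $d=2$ these sit exactly on a borderline and yield only $m(2L)=O(1)$, so something genuinely new is needed to get a power law. But the route you then propose does not close the gap: integrating $\partial_h B_L=-\sum_v\mu_v$ over $h$ gives only $\int_\R m(2L;h)\,dh\le C'/L$, an $L^1$-in-$h$ statement, and you acknowledge that converting this into a bound at a prescribed $h$ is the real obstacle. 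The ``multi-scale bootstrap'' you invoke is never formulated; there is no mechanism in what you wrote that trades the $L^1$ control for pointwise control, and indeed no such mechanism is obvious, because $h\mapsto m(L;h)$ need not be monotone or concentrated, and the integrated susceptibility bound you mention does not preclude a narrow spike at the given $h$. So the proposal stops where the proof has to start.

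The paper's actual device is different and stays at fixed $h$ throughout. Instead of shifting the uniform field $h$ over the whole box, one shifts the Gaussian random-field average $\widehat\eta$ over an \emph{inner} box $\Lambda(\ell)$, and one works with the surface tension $\mathcal T_\ell$ of the annulus $\Lambda(3\ell)\setminus\Lambda(\ell)$ rather than with $B_L$ of the full box. The identity $\mathcal T_\ell(\eta)=2\eps\int_\R D_\ell(\eta^{(t)})\,dt$ (Theorem~\ref{thm:T2}) turns a Lebesgue shift of $\widehat\eta$ into an expectation against the unbounded weight $1/\phi(\widehat\eta)$, precisely because $\widehat\eta$ is a standard Gaussian; this is where the Gaussian structure buys something your $h$-integration cannot. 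Comparing this representation with the deterministic bound $\mathcal T_\ell\le 4B_\ell$ (Theorem~\ref{thm:T1}) and optimizing over test functions (Lemma~\ref{lem:min_integral}) yields an anti-concentration bound for $D_\ell$ (Proposition~\ref{prop:quantitative_lower_bound}): with probability at least $\chi(\mathrm{const}\cdot J/\eps)$, the disagreement count $D_\ell$ falls below half its mean. The contradiction then comes from a concentration bound on $D_\ell$ proved via a second-moment estimate (Proposition~\ref{prop:var_bound}), which is valid at any scale $L$ where $m(j)$ is comparable on $[1,L]$ in the controlled sense \eqref{eq:assumed_decay}; Proposition~\ref{prop:comp_decay} guarantees such scales exist whenever $m$ fails to decay by a power. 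Your proposal contains none of the annulus construction, the $\widehat\eta$-change-of-variables, the $1/\phi(\widehat\eta)$ weight, or the anti-concentration/concentration dichotomy under controlled decay, and these are exactly the parts that convert the Aizenman--Wehr borderline argument into a quantitative power-law bound.
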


For the nearest-neighbor interaction
\be \label{J_nn}
 J_{u,v} = J \, \delta_{d(u,v),1}
\ee
 the proof yields
\be     \gamma := 2^{-10}\cdot  \chi\left(\frac{50   J }{\eps}\right)\label{eq:gamma def}
\ee
in terms of the tail of the Gaussian  distribution function:
\begin{equation}\label{eq:Phi_def}
 \chi(t):=2\int_t^\infty \phi(s) \, ds \,  ,\qquad  \phi(s) = \frac{1}{\sqrt{2\pi}}e^{-s^2/2}\,  .
\end{equation}

The phenomenon and the arguments discussed in the proof are somewhat simpler to present in the limit of zero temperature, where the quenched random field is the only source of disorder.
We therefore start by proving Theorem~\ref{thm:power_law_bound} for this case, emphasizing the setting of nearest-neighbor interaction.
Then, in Section\,\ref{sec:positive_temperature} we present the changes by which the argument extends to  $T>0$.   With minor adjustments of the constants, discussed in Section~\ref{sec:rangeR},  the natural extension of the statement  to   translation-invariant pair interactions of finite range is also valid.

\subsection{Direct implications} \mbox{} \\[-1ex]

By the FKG inequality (see Section~\ref{sec:monotonicity}), the difference whose mean is the order parameter is non-negative for any $\eta$ (and all $T\geq 0$, $h\in \R$),
\be \label{eq:FKG11}
\langle\sigma_\0\rangle^{\Lambda(L), +}  - \langle\sigma_\0\rangle^{\Lambda(L), -}  \ \geq \ 0 \,.
\ee
Hence the bound on the mean \eqref{eq:power_law}
implies (through  Markov's inequality)  that this quantifier of sensitivity to boundary condition is similarly small with high probability.

The order parameter $m(L)$ controls also the covariances of: {\it i)}  the spins under  the infinite-volume quenched Gibbs states $\langle -\rangle \equiv \langle -\rangle_{T,\mathcal J, h, \eps \eta}$, and {\it ii)} of  the infinite-volume quenched Gibbs state magnetization $\langle \sigma_u\rangle$ under the random field fluctuations, over which the average is denoted by $\E(-)$.
To express these statements we denote
\ba\label{eq:truncated_correlations}
\langle \sigma_u;\sigma_v\rangle &:= \ \langle \sigma_u \sigma_v\rangle - \langle \sigma_u\rangle \, \langle \sigma_v\rangle  \\[1ex]
 \E(\langle \sigma_u \rangle; \langle \sigma_v\rangle)
  & :=  \E\Big(\langle \sigma_u \rangle \, \langle \sigma_v\rangle   \Big) -  \E(\langle \sigma_u \rangle)    \, \,   \E(\langle \sigma_v \rangle)
 \\
 &\,\,= \E\Big(\big[\langle \sigma_u \rangle - \E(\langle \sigma_0\rangle) \big]\,
 \big[\langle \sigma_v \rangle - \E(\langle \sigma_0\rangle) \big] \Big)
 \,.
\ea
Each of these truncated correlations is non-negative: in the former case due to the FKG property of the RFIM,  and in the latter due to monotonicity of  $\langle \sigma_u \rangle$ in $\eta$ and the Harris/FKG inequality for product measures.

As we  prove below (Lemma~\ref{lem:cov}), for pairs $\{u,v\}\in \Z^2$, if $d(u,v)>\ell$ then
\be \label{Es1}
 \E{(\langle \sigma_u;\sigma_v\rangle )} \  \leq   \    2\, m(\ell ;T,  \mathcal J, h, \epsilon)
\ee
while if $ d(u,v) \geq 2 \ell + R(\mathcal J)$, with  $R(\mathcal J) := \max\{d(u,v) \, : \, J_{u,v} \neq 0\}$ (the interaction's range) then
\be
  \label{eq:mcorr}
   \E(\langle \sigma_u \rangle; \langle \sigma_v\rangle) \ \leq \    4\,  m(\ell ;T,  \mathcal J, h, \epsilon) \,.
\ee
The comment made above in relation to \eqref{eq:m},  applies also here:   The non-negativity of $\langle \sigma_u;\sigma_v\rangle$, together with \eqref{Es1}, implies that with high probability it does not exceed $m(\ell ;T,  \mathcal J, h, \epsilon) $ by a large multiple.
The proof of \eqref{Es1} and \eqref{eq:mcorr} does not require the analysis which is developed in  this paper.  It is therefore postponed to  Section\,\ref{app:corr}.

 For  \eqref{eq:mcorr}
of particular interest is  $h=0$ and $T=0$.  In this case $\langle\sigma_u\rangle $ coincides with the infinite-volume ground state configuration $\widehat\sigma_u(\eta) $ which, as is already known, is unique for almost all $\eta$.   By the spin-flip symmetry  $\E(\widehat\sigma_u ) =0$, and the bound \eqref{eq:mcorr} translates into:
\be
   0\le \E(\widehat\sigma_u \widehat\sigma_v) \ \leq \
    4\,m(\ell ;0,  \mathcal J, 0, \epsilon)\,.
\ee

\subsection{A remaining question} \mbox{} \label{sec:remaining_question}\\[-1ex]

As we shall discuss in greater detail in Appendix~\ref{sec:high disorder}, at high enough disorder, i.e. large enough $\eps$,  the order parameter $m(L)$ decays exponentially fast in $L$.   Our results do not resolve the question of whether the two-dimensional model exhibits a disorder-driven  phase transition,  at which the decay rate changes from exponential to a power law, as the disorder is lowered (possibly even at $T=0$).    This remains among the interesting open problems concerning the Imry-Ma phenomenon in two dimensions, on which more is said in the open problem Section~\ref{sec:OP}.

\section{Gibbs equilibrium states}\label{sec:Gibbs states}

 \subsection{The Gibbs measure} \mbox{} \\[-1ex]

Discussing the RFIM on $\Z^2$
  we shall use the following terminology. Two vertices are deemed adjacent,  $u\sim v$, if they differ by  a unit vector. The graph distance on $\Z^2$ is denoted $d(u,v)$ and the graph ball of radius $L$ around $u$ is denoted $\Lambda_u(L)$, with $\Lambda(L)$ standing for $\Lambda_\0(L)$, as before Theorem~\ref{thm:power_law_bound}.
The \emph{edge boundary} of a subset $\Lambda \subset \Z^2$ (which is used in decoupling estimates) is denoted
\begin{equation}
  \partial_{\text{e}}\Lambda := \{(u,v)\,\colon\, u\in\Lambda,\, v\in\Z^2\setminus\Lambda,\,  J_{u,v} \neq 0 \}
\end{equation}
and the \emph{external boundary} (which is used when imposing boundary conditions) is
\begin{equation}
  \extB\Lambda:=\{v\in\Z^2\setminus\Lambda\,\colon\,\exists u\in\Lambda, J_{u,v} \neq 0 \} \,.
\end{equation}

\begin{figure}[h]
 \begin{center}
\includegraphics[width=.3 \textwidth]{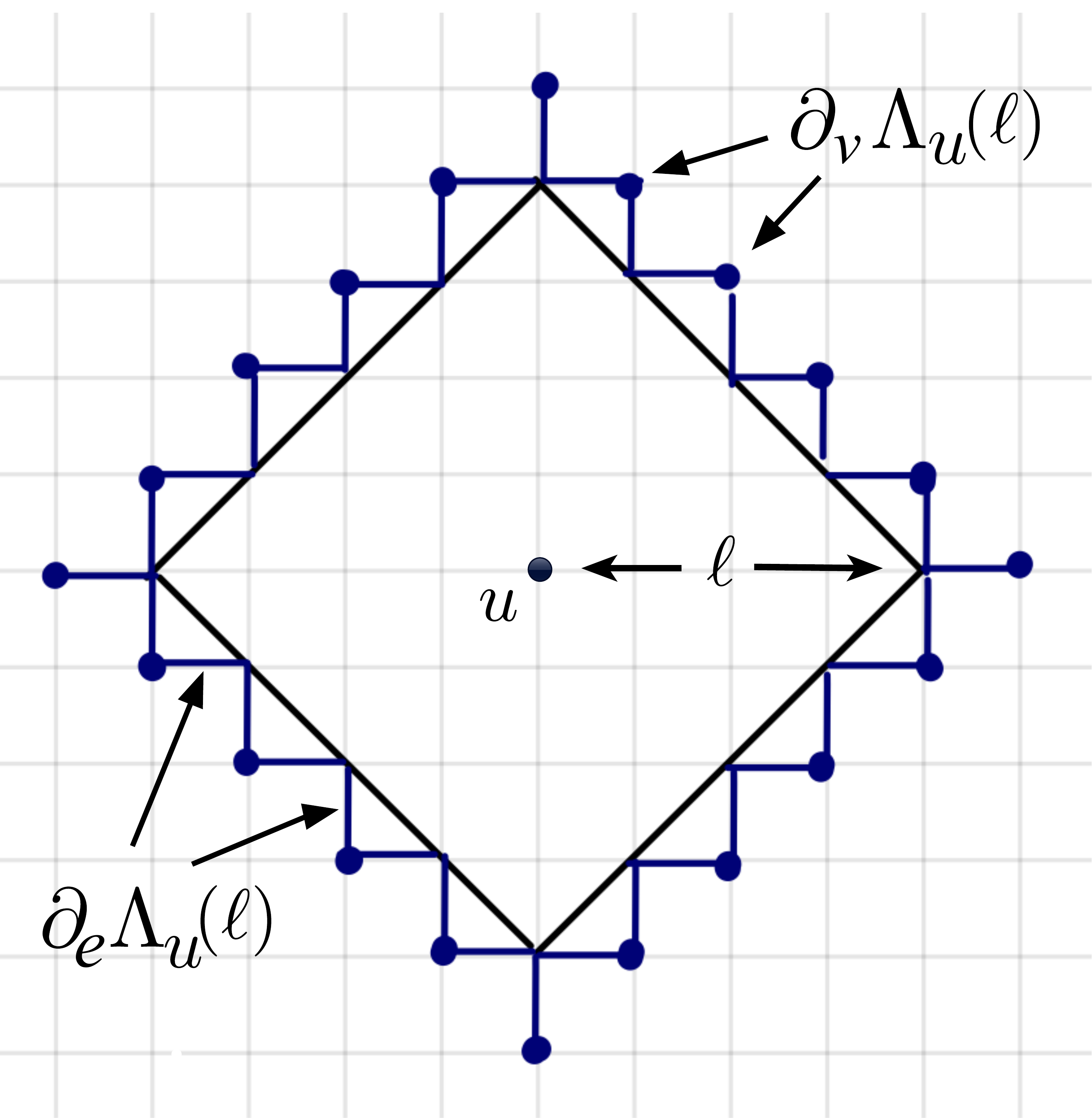}
\end{center}
\caption{A subset of $\Z^2$ of the form of $\Lambda_u(\ell)$ and its two boundary sets: the edge boundary $\partial_{\text{e}} \Lambda_u$ and the vertex (external)  boundary $\extB \Lambda_u$, both drawn for the case of the nearest-neighbor interaction.  }  \label{fig_boundary}
\end{figure}

The RFIM  Gibbs equilibrium state in the finite subset $\Lambda \subset \Z^2$,  at specified values of the parameters $(T, \mathcal J, h, \eps) $,   the random field $\eta$, and a configuration of boundary spin values $\tau:\extB\Lambda\to\{-1,1\}$,
 is  the probability measure  over $\Omega_\Lambda = \{-1,1\}^\Lambda$ given by
\begin{equation}\label{eq:P_Lambda_tau_def}
  \P^{\Lambda, \tau}(\sigma) := \frac{1}{Z^{\Lambda, \tau}}e^{-\frac 1T H^{\Lambda,\tau}(\sigma)},
\end{equation}
where
\begin{equation}\label{eq:Hamiltonian_def}
  H^{\Lambda,\tau}(\sigma) := -\sum_{u,v\in\Lambda} J_{u,v} \sigma_u \sigma_v - \sum_{(u,v)\in\partial_{\text{e}}\Lambda} J_{u,v} \sigma_u \tau_v -\sum_{v\in\Lambda}(h+  \eps \eta_v) \sigma_v\,
\end{equation}
and $Z^{\Lambda,\tau}$ is the corresponding normalizing factor (the ``partition function''). The associated expectation operator is denoted $\langle-\rangle^{ \Lambda,\tau}$. The notation $\P^{\Lambda, \pm}$ or $\langle-\rangle^{ \Lambda,\pm}$ indicates that $\tau$ is the corresponding uniform configuration $\tau \equiv +1$ or $\tau \equiv -1$. The notation $\P{}$ and $\E{}$ is used for the probability and expectation operators, respectively, of the further average over the random field.   \\

At $T=0$,  the measure $\P^{\Lambda, \tau}$ is supported on the almost-surely unique configuration which minimizes $H^{\Lambda,\tau}$.  These ground-state configurations, which depend on $\eps \eta$ and $(\mathcal J,h)$, are denoted here by $\sigma^{\Lambda, \tau} = (\sigma^{\Lambda, \tau}_v)_{v \in \Lambda }$ (The dependence on $\eta$ is not displayed, but it is in the focus of the discussion.)   \\

\subsection{Monotonicity properties}\mbox{} \label{sec:monotonicity}\\[-1ex]

In our  discussion we shall take advantage of  the known monotonicity property of the ferromagnetic Ising model,  which is that its Gibbs equilibrium states as well as the ground-state configurations, at given $\mathcal J$, $h$ and $\eps$, are increasing functions of the local field variables $\eta$ and of the boundary spin configuration $\tau$.
The statement  is a known  consequence of the FKG inequality \cite{FKG71}.  The $T=0$ version can also be seen through a more direct argument.  \\

Thus, for any region $\Lambda$ and pairs of
boundary conditions $\tau^-, \tau^+ :\extB\Lambda\to\{-1,1\}$:
\begin{equation}\label{eq:stochastic_domination_pos_temp}
\text{$\tau^-\le \tau^+ \quad \Longrightarrow \quad \P^{\Lambda,\tau^+}$ stochastically dominates $\P^{\Lambda,\tau^-}$}
\end{equation}
where an inequality between configurations is to be interpreted as holding pointwise.    (Unlike $\R$, the configuration space is only partially ordered, but that suffices for our purpose.) The following special case is noted for later reference
\begin{equation}\label{eq:FKG_pos_temp}
\text{$\tau^-\le \tau^+ \quad \Longrightarrow \quad  \langle\sigma_v\rangle^{\Lambda,\tau^-}\le \langle\sigma_v\rangle^{\Lambda,\tau^+}$ for each $v\in\Lambda$}.
\end{equation}
By related reasoning, the Gibbs state at $+$ (or $-$) boundary conditions  is stochastically  decreasing  (and correspondingly increasing) in its dependence on $\Lambda$.  In particular, for each $v \in \Lambda_1 \subset \Lambda_2 \subset \Z^2$:
\begin{equation}\label{eq:FKG_consequence_pos_temp}
\text{$\langle\sigma_v\rangle^{\Lambda_1,+} \ge \langle\sigma_v\rangle^{\Lambda_2,+}\quad$ and $\quad\langle\sigma_v\rangle^{\Lambda_1,-} \le \langle\sigma_v\rangle^{\Lambda_2,-}$}.
\end{equation}
\smallskip
The above inequalities hold also at $T=0$, where $\sigma^{\Lambda,\tau}_v$ substitutes for $\langle\sigma_v\rangle^{\Lambda,\tau}$. It is convenient to note this explicitly for later reference:
\begin{align}
&\tau^-\le \tau^+ \quad \Longrightarrow \quad  \sigma^{\Lambda,\tau^-}\le \sigma^{\Lambda,\tau^+}\label{eq:FKG},\\
&\sigma^{\Lambda_1,+}_v \ge \sigma^{\Lambda_2,+}_v \quad \mbox{and} \quad \sigma^{\Lambda_1,-}_v \le \sigma^{\Lambda_2,-}_v,\label{eq:FKG_consequence}\\
&\sigma^{\Lambda_1,+} - \sigma^{\Lambda_1,-}  \, \geq \,
\sigma^{\Lambda_2, +}  -  \sigma^{\Lambda_2, -}  \,  \geq \,  0\,,\label{eq:disag_mon}
\end{align}
with the second and third assertions holding for $v \in \Lambda_1 \subset \Lambda_2 \subset \Z^2$.

\section{Proof of the main result for $T=0$}

We start with the zero-temperature case of Theorem~\ref{thm:power_law_bound} as it already contains the main features of the problem while being technically simpler.    For a further simplification, we  consider first the nearest-neighbor interaction~\eqref{J_nn}.  The extension to finite-range interactions will follow in Section\,\ref{sec:rangeR}.

\subsection{Influence/disagreement  percolation}\mbox{} \\[-1ex]

Due to the monotonicity of the ground state in the boundary conditions, the order parameter $m(L)$ which is defined in \eqref{eq:m} can be viewed as the probability that the difference of the boundary conditions at distance $L$ from a site $v$ ``percolates'' to  $v$:
\begin{equation}\label{eq:p_L_def}
m(L) = m(L; 0, \mathcal  J, h, \epsilon)  \=  \P\left(\sigma^{\Lambda(L),+}_{\mathbf{0}}> \sigma^{\Lambda(L),-}_{\mathbf{0}}\right)  \,.
  \end{equation}

\noindent{\bf Remark:} Disagreement percolation provides a concrete manifestation of the influence of the boundary condition.  The terms \emph{disagreement percolation}   and
 \emph{influence percolation}
are  almost interchangeable: the former referring to  specific manifestations of the latter.  The term percolation is called for since the influence/disagreement spreads only along connected sets.

To learn about $m(L)$  we find it useful to consider the following functions of the disorder:
\be \label{eq:D_l_def}
  D_{\ell}(\eta)   \ := \   \sum_{v\in \Lambda(\ell)} \1[ \sigma^{\Lambda(3\ell),+}_v\neq \sigma^{\Lambda(3\ell),-}_v]\, ,
\ee
the number of sites in $\Lambda(\ell)$ to which the difference of the boundary conditions
 imposed on the boundary of $\Lambda(3\ell)$ has ``percolated'', and
\be \label{eq:B_l_def}
  B_{\ell}(\eta)   \ := \  \sum_{(u,v)\in \partial_{\text{e}} \Lambda(2\ell)} J_{u,v} \, \1[\{\sigma^{\Lambda(3\ell)\backslash\Lambda(\ell),+}_u\neq \sigma^{\Lambda(3\ell)\backslash\Lambda(\ell),-}_u\}\cap\{\sigma^{\Lambda(3\ell)\backslash\Lambda(\ell),+}_v\neq \sigma^{\Lambda(3\ell)\backslash\Lambda(\ell),-}_v\}]\,  .
\ee
The latter is the combined strength of the  edges crossing a separating surface at half the distance of $\Lambda(\ell)$ to the boundary of $\Lambda(3\ell)$, which contribute to the surface tension.

\subsection{The surface tension}  \label{sec:T}  \mbox{} \\[-1ex]

One may learn about the probability distribution of the disagreement set $D_\ell$   through consideration of
the surface tension, which for scale $\ell$ (always a positive integer) is defined as
\begin{multline} \label{T_def}
\mathcal T_{\ell}(\eta)  :=  \\
 -\left[ \mathcal E^{+,+}(\Lambda(3\ell)\backslash \Lambda(\ell)) + \mathcal E^{-,-}(\Lambda(3\ell)\backslash  \Lambda(\ell)) -  \mathcal E^{+,-}(\Lambda(3\ell)\backslash \Lambda(\ell)) - \mathcal E^{-,+}(\Lambda(3\ell)
\backslash \Lambda(\ell)) \right] \,.
\end{multline}
Here $\mathcal E^{s,s'}(\Lambda(3\ell)\backslash \Lambda(\ell)) $ denotes the minimal value of the Hamiltonian
$H^{\Lambda(3\ell)\backslash \Lambda(\ell), \tilde{\tau}}$ (see~\eqref{eq:Hamiltonian_def}) over spin configurations satisfying the boundary conditions
\be\label{eq:tilde_tau_def}
\tilde{\tau}_v \, = \, \begin{cases} s & v\in \extB \Lambda(3\ell) \\
        s' & v\in \extB (\Z^2\setminus\Lambda(\ell))
        \end{cases} \, .
\ee

Our analysis proceeds by contrasting  a natural upper bound on the surface tension, with the analysis of the not-improbable fluctuations of $\mathcal T_\ell(\eta)$.   For the upper bound we have:

\begin{thm} \label{thm:T1}  In the RFIM with nearest-neighbor interaction, for each configuration of the random field:
\be \label{T_B}
 \mathcal T_{\ell}(\eta)
 \  \leq  \  4  \, B_{\ell} (\eta)  \  \leq  \  8 J \, |\extB \Lambda(2 \ell)|  \,.
\ee
\end{thm}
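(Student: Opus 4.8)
The plan is to bound the surface tension from above by feeding well‑chosen trial configurations into the two ``mismatched'' minimization problems $\mathcal E^{+,-}$ and $\mathcal E^{-,+}$, built by gluing together the minimizers of the two ``matched'' problems across the separating surface $\partial_{\text{e}}\Lambda(2\ell)$. Write $A:=\Lambda(3\ell)\setminus\Lambda(\ell)$, and for $s,s'\in\{+,-\}$ let $\sigma^{s,s'}$ be a minimizer of $H^{A,\tilde\tau^{s,s'}}$, so that $\mathcal E^{s,s'}(A)=H^{A,\tilde\tau^{s,s'}}(\sigma^{s,s'})$. The external boundary of the annulus splits as $\extB A=\extB\Lambda(3\ell)\,\cup\,\extB(\Z^2\setminus\Lambda(\ell))$ (its outer and inner rims); in particular $\tilde\tau^{+,+}$ and $\tilde\tau^{-,-}$ are just the all‑plus and all‑minus boundary conditions on $A$, so that $\sigma^{+,+}=\sigma^{A,+}$ and $\sigma^{-,-}=\sigma^{A,-}$ — the very configurations entering the definition \eqref{eq:B_l_def} of $B_\ell$. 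By the zero‑temperature monotonicity \eqref{eq:FKG} one has $\sigma^{+,+}\ge\sigma^{-,-}$ pointwise, hence $\sigma^{+,+}_w-\sigma^{-,-}_w\in\{0,2\}$ at every site $w$.

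First I would introduce the glued trial configurations
\[
\hat\sigma^{+,-}_v\,:=\,\begin{cases}\sigma^{+,+}_v,&v\in A\setminus\Lambda(2\ell),\\ \sigma^{-,-}_v,&v\in A\cap\Lambda(2\ell),\end{cases}\qquad\quad\hat\sigma^{-,+}_v\,:=\,\begin{cases}\sigma^{-,-}_v,&v\in A\setminus\Lambda(2\ell),\\ \sigma^{+,+}_v,&v\in A\cap\Lambda(2\ell),\end{cases}
\]
i.e.\ near the outer rim each copies the matched minimizer whose boundary value agrees with $s$, and near the inner rim the one agreeing with $s'$. Since these respect the boundary conditions of the $(+,-)$ and $(-,+)$ problems, $\mathcal E^{+,-}(A)\le H^{A,\tilde\tau^{+,-}}(\hat\sigma^{+,-})$ and likewise for $(-,+)$; adding the two gives an upper bound on $\mathcal E^{+,-}(A)+\mathcal E^{-,+}(A)$.

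The heart of the argument is then a term‑by‑term comparison of $H^{A,\tilde\tau^{+,-}}(\hat\sigma^{+,-})+H^{A,\tilde\tau^{-,+}}(\hat\sigma^{-,+})$ with $H^{A,\tilde\tau^{+,+}}(\sigma^{+,+})+H^{A,\tilde\tau^{-,-}}(\sigma^{-,-})=\mathcal E^{+,+}(A)+\mathcal E^{-,-}(A)$, using \eqref{eq:Hamiltonian_def}. At every site the unordered pair $\{\hat\sigma^{+,-}_v,\hat\sigma^{-,+}_v\}$ equals $\{\sigma^{+,+}_v,\sigma^{-,-}_v\}$, so the on‑site field terms contribute equally to the two sides; the same is true of every interaction edge with both endpoints on the same side of the cut, and — because a boundary interaction term only sees the value $s$ on the outer rim and $s'$ on the inner rim, together with the split of $\extB A$ above — of every boundary interaction term. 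Hence the entire difference comes from the edges $(u,v)\in\partial_{\text{e}}\Lambda(2\ell)$ straddling the cut, and a one‑line computation shows the per‑edge excess equals $J_{u,v}\,(\sigma^{+,+}_u-\sigma^{-,-}_u)(\sigma^{+,+}_v-\sigma^{-,-}_v)$, which is $4J_{u,v}$ when both $u$ and $v$ lie in the disagreement set $\{w:\sigma^{A,+}_w\neq\sigma^{A,-}_w\}$ and is $0$ otherwise. Summing over straddling edges gives exactly $4B_\ell(\eta)$, and rearranging $\mathcal T_\ell=(\mathcal E^{+,-}+\mathcal E^{-,+})-(\mathcal E^{+,+}+\mathcal E^{-,-})$ (recall \eqref{T_def}) yields $\mathcal T_\ell(\eta)\le 4B_\ell(\eta)$. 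The last inequality of \eqref{T_B} is then immediate: dropping the indicators, $B_\ell(\eta)\le\sum_{(u,v)\in\partial_{\text{e}}\Lambda(2\ell)}J_{u,v}=J\,|\partial_{\text{e}}\Lambda(2\ell)|\le 2J\,|\extB\Lambda(2\ell)|$, since each vertex of $\extB\Lambda(2\ell)$ has at most two neighbors inside the diamond $\Lambda(2\ell)$.

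The only delicate point — the ``main obstacle'', such as it is — is the boundary bookkeeping: one must check that for every integer $\ell\ge1$ the cut $\partial_{\text{e}}\Lambda(2\ell)$ lies strictly between $\Lambda(\ell)$ and $\Lambda(3\ell)$, so that every vertex of $A$ adjacent to the outer rim $\extB\Lambda(3\ell)$ lies in $A\setminus\Lambda(2\ell)$, every vertex of $A$ adjacent to the inner rim $\extB(\Z^2\setminus\Lambda(\ell))$ lies in $A\cap\Lambda(2\ell)$, no interaction edge is simultaneously a cut edge and a boundary edge, and the matching of the three families of terms goes through as claimed; this is precisely where the choice of the radii $\ell,2\ell,3\ell$ is used. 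A cosmetic second point: at $T=0$ ground states may fail to be unique, but taking $\sigma^{+,+}$ maximal and $\sigma^{-,-}$ minimal among the ground states of the respective problems preserves $\sigma^{+,+}\ge\sigma^{-,-}$ and makes the bound hold for every realization $\eta$, as the statement requires.
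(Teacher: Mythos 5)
Your proof is correct, but it takes a genuinely different route from the paper's zero-temperature argument. The paper's proof of Theorem~\ref{thm:T1} modifies the Hamiltonian: it freezes the spins on the agreement set (the vertices where $\sigma^{+,+}$ and $\sigma^{-,-}$ coincide, which by monotonicity forces all four ground states to agree there), then deletes the bonds of $\partial_{\text{e}}\Lambda(2\ell)$ both of whose endpoints lie in the disagreement set, so that the modified Hamiltonian decouples across the cut and the modified surface tension is identically zero; since the deletion changes each of the four minimal energies by at most $B_\ell$, one concludes $\mathcal T_\ell\le 4B_\ell$. You instead leave the Hamiltonian alone and feed explicit glued trial configurations into the two mismatched minimizations, computing the excess over $\mathcal E^{+,+}+\mathcal E^{-,-}$ edge by edge and obtaining the exact identity $\sum_{(u,v)\in\partial_{\text{e}}\Lambda(2\ell)}J_{u,v}(\sigma^{+,+}_u-\sigma^{-,-}_u)(\sigma^{+,+}_v-\sigma^{-,-}_v)=4B_\ell$. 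Both proofs attribute the surface tension to the same set of edges and yield the same constant, but the mechanisms differ: the paper alters the objective, you alter the candidate. Notably, your gluing is precisely the zero-temperature specialization of the bijection~\eqref{eq:bijection} the paper invokes for the positive-temperature bound in Theorem~\ref{thm:upper bound surface tension_pos_temp}, so your argument provides a unified treatment of $T=0$ and $T>0$, at the cost of somewhat more explicit bookkeeping; your closing remark on selecting the maximal/minimal ground state to preserve $\sigma^{+,+}\ge\sigma^{-,-}$ for every $\eta$ is a careful point the paper leaves implicit.
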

\begin{proof}
Let $A$ be the set of vertices in $\Lambda(3\ell)\backslash \Lambda(\ell)$ on which there is equality between the ground-state configurations with $++$ and $--$ boundary conditions. The monotonicity property \eqref{eq:FKG} implies that all ground states on $\Lambda(3\ell)\backslash \Lambda(\ell)$ must coincide on $A$. Consider making two modifications to the Hamiltonian in the domain $\Lambda(3\ell)\backslash \Lambda(\ell)$: First, rigidly restrict the spin values at all vertices in $A$ to their common value in these ground states. This clearly has no effect on the energies of the ground-state configurations considered above. Second, remove the energy terms corresponding to bonds in $\partial_{\text{e}}\Lambda(2\ell)$ whose endpoints do not intersect $A$. This change may affect the energy of each of the four ground states by at most $ B_{\ell}(\eta)$. Once both changes are made, the Hamiltonian decomposes into a sum of two  terms, in whose minimization there is no interaction between the effects of the two components of the boundary. Thus the surface tension based on the modified Hamiltonian vanishes.

It follows that $\mathcal T_\ell(\eta) \ \leq 4\,    B_\ell(\eta) $
as claimed in the first inequality in \eqref{T_B}.  The second is its elementary consequence.
\end{proof}

The upper bound  which \eqref{T_B} yields on  $\mathcal T_\ell$ will be contrasted with the implications of the following representation.

\begin{thm} \label{thm:T2} For the RFIM with IID Gaussian random fields,
the surface tension bears  the following relation with disagreement percolation:
\be \label{DD2}
 \mathcal T_{\ell}(\eta)   \=  2 \eps\,
 \int_\R D_\ell(\eta^{(t)}) \, dt \ = \
 \ \frac{2 \eps  }{\sqrt{|\Lambda(\ell)|}}\, \E _{\widehat \eta } \left(  \frac{D_{\ell}(\eta) }{ \phi(\widehat \eta) } \right)  \, ,
\ee
where: \begin{enumerate}[1)]
\item
 $\eta^{(t)}$ is defined by adding a uniform field of intensity $t$ in $\Lambda(\ell)$,
\begin{equation} \label{def:t}
  \eta^{(t)}_v := \begin{cases}
    \eta_v + t&v\in\Lambda(\ell)\\
    \eta_v&\text{otherwise}
  \end{cases} \, .
\end{equation}
\item the variable $\widehat \eta $ is defined as
\be \label{eq:etahat}
\widehat \eta := \frac 1{\sqrt{| \Lambda (\ell)|}} \sum _{v\in  \Lambda (\ell)} \eta_v   \,  .
\ee
\item  $\E _{\widehat \eta } $ represents an average over  $\widehat \eta$ at fixed values of the other, orthogonal, Gaussian degrees of freedom which determine $\eta$.
 \item  $\phi$ is the Gaussian density function \eqref{eq:Phi_def}.

 \end{enumerate}
\end{thm}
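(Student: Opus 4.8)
The plan is to establish the identity in two independent steps, starting from the definition of the surface tension and differentiating in an auxiliary uniform field. First, I would interpolate between the $++$ and $-+$ ground-state energies (and similarly between $+-$ and $--$) by introducing a family of Hamiltonians on $\Lambda(3\ell)\setminus\Lambda(\ell)$ in which the boundary condition on the inner boundary $\extB(\Z^2\setminus\Lambda(\ell))$ is fixed to $s'$ while the field inside $\Lambda(\ell)$ is shifted by $t$ as in \eqref{def:t}. The key observation is that shifting $\eta$ to $\eta^{(t)}$ inside $\Lambda(\ell)$ is, up to a constant independent of the boundary data on $\extB\Lambda(3\ell)$, equivalent to adding the term $-\eps t\sum_{v\in\Lambda(\ell)}\sigma_v$; but since $\Lambda(\ell)$ is \emph{not} part of the region $\Lambda(3\ell)\setminus\Lambda(\ell)$ on which the minimization takes place, this shift instead acts through the boundary term on $\extB(\Z^2\setminus\Lambda(\ell))$. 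A cleaner route, which I would actually take, is to realize the four energies $\mathcal E^{s,s'}$ as minima over configurations on the \emph{enlarged} region $\Lambda(3\ell)$ with the outer boundary condition $s$ and with the spins on $\Lambda(\ell)$ held rigidly equal to $s'$ — this is an equivalent description of the split problem in \eqref{T_def} — and then add the field shift $t$ on $\Lambda(\ell)$. With the spins there pinned to $s'$, this shift changes the energy by the \emph{constant} $-\eps t s'|\Lambda(\ell)|$, which cancels in the alternating combination $\mathcal E^{+,+}+\mathcal E^{-,-}-\mathcal E^{+,-}-\mathcal E^{-,+}$; so one must instead let the pinned value itself vary, or equivalently compare the pinned-at-$+$ and pinned-at-$-$ problems as $t$ ranges over $\R$.

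Concretely: let $g_s(t)$ be the minimum of $H^{\Lambda(3\ell),s}$ over configurations on $\Lambda(3\ell)$ that are pinned to $+1$ throughout $\Lambda(\ell)$, with the field inside $\Lambda(\ell)$ shifted by $t$; and let $g'_s(t)$ be the same with pinning to $-1$. Then $g_s(t)=\mathcal E^{s,+}-\eps t|\Lambda(\ell)|$ and $g'_s(t)=\mathcal E^{s,-}+\eps t|\Lambda(\ell)|$ for all $t$, by the trivial constant shift just described. The sign of $g_s(t)-g'_s(t)$ therefore flips exactly once as $t$ increases through the threshold value $t_s:=\bigl(\mathcal E^{s,+}-\mathcal E^{s,-}\bigr)/(2\eps|\Lambda(\ell)|)$, and one reads off $\mathcal T_\ell=2\eps|\Lambda(\ell)|\,(t_- - t_+)$. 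The main computation is then to express $t_- - t_+$ as an integral of the disagreement count. For this I would fix $t$ and compare the \emph{unpinned} ground states on $\Lambda(3\ell)\setminus\Lambda(\ell)$ with $++$ versus $--$ boundary conditions, in the field $\eta^{(t)}$; the set where they disagree is exactly the set counted by $D_\ell$ evaluated at $\eta^{(t)}$ restricted to $\Lambda(\ell)$ (here one uses the monotonicity \eqref{eq:FKG}, \eqref{eq:disag_mon} to identify the disagreement region with the relevant geometry, and the fact that shifting the interior field is what drives sites in $\Lambda(\ell)$ between the two ground states). Differentiating the energy gap in $t$ and using that, by an envelope/Hellmann–Feynman argument for the piecewise-linear min of finitely many affine functions, $\frac{d}{dt}\bigl[\text{gap}\bigr] = -\eps\cdot(\text{number of disagreeing sites in }\Lambda(\ell)) = -\eps\,D_\ell(\eta^{(t)})$ at all but finitely many $t$, and that $D_\ell\to 0$ as $t\to+\infty$ and $D_\ell\to|\Lambda(\ell)|$ as $t\to-\infty$, one integrates to obtain $\mathcal T_\ell(\eta)=2\eps\int_\R D_\ell(\eta^{(t)})\,dt$, which is the first equality in \eqref{DD2}.

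For the second equality I would change variables from the one-dimensional shift $t$ to the Gaussian average $\widehat\eta$ of \eqref{eq:etahat}. Writing $\eta$ in an orthonormal basis whose first vector is $|\Lambda(\ell)|^{-1/2}\mathbbm 1_{\Lambda(\ell)}$, the shift $\eta\mapsto\eta^{(t)}$ moves $\widehat\eta$ to $\widehat\eta+t\sqrt{|\Lambda(\ell)|}$ and leaves the orthogonal coordinates fixed; so $\int_\R D_\ell(\eta^{(t)})\,dt = |\Lambda(\ell)|^{-1/2}\int_\R D_\ell\bigl|_{\widehat\eta=s}\,ds$ at fixed orthogonal coordinates, and inserting the Gaussian density $\phi$ turns the Lebesgue integral into the conditional expectation $\E_{\widehat\eta}\bigl(D_\ell/\phi(\widehat\eta)\bigr)$, giving the stated form. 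The one point demanding care — and the part I expect to be the main obstacle — is the rigorous justification of the piecewise-linear/envelope differentiation: one must check that $t\mapsto \mathcal E^{s,+}(t)-\mathcal E^{s,-}(t)$ (the energy gap at field $\eta^{(t)}$) is indeed piecewise affine with slope $-\eps D_\ell(\eta^{(t)})$ on each piece, that the ground state is a.s.\ unique so there is no ambiguity at the breakpoints (a measure-zero set of $t$), and that the monotonicity of the disagreement set in $t$ forces $D_\ell(\eta^{(t)})$ to be monotone (hence the integral is finite and the boundary values $0$ and $|\Lambda(\ell)|$ are attained). All of this is standard once the bookkeeping of which region the field shift acts on is set up correctly, but that bookkeeping is exactly where an error would be easy to make.
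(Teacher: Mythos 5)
Your outline contains the right germ of an idea---shift the field inside $\Lambda(\ell)$ by $t$, use an envelope/Hellmann--Feynman derivative to produce a disagreement count, integrate, and then change variables from $t$ to $\widehat\eta$---and the final change-of-variables step matches the paper exactly. But the middle of your argument, the part you flag as ``where an error would be easy to make,'' does in fact contain errors that would block the proof.

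First, your pinned energies $g_s(t)$, $g'_s(t)$ are \emph{affine} in $t$ by construction (the spins on $\Lambda(\ell)$ are rigid, so the field shift contributes exactly $\mp\eps t|\Lambda(\ell)|$), hence $\frac{d}{dt}\bigl[g_s(t)-g'_s(t)\bigr] = -2\eps|\Lambda(\ell)|$ identically. This is a constant; it is never equal to $-\eps D_\ell(\eta^{(t)})$, and so no envelope argument applied to these pinned objects can produce the disagreement count. Second, when you pivot to ``the unpinned ground states on $\Lambda(3\ell)\setminus\Lambda(\ell)$ in the field $\eta^{(t)}$,'' you lose the $t$-dependence entirely: $\eta^{(t)}$ only shifts the field on $\Lambda(\ell)$, which is \emph{excluded} from the annulus, so those energies do not depend on $t$ at all; moreover the spins of those ground states live in the annulus, not in $\Lambda(\ell)$, so their disagreement set is not $D_\ell$. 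Third, the claimed boundary value $D_\ell(\eta^{(t)})\to|\Lambda(\ell)|$ as $t\to-\infty$ is wrong: when $t\to-\infty$ every spin in $\Lambda(\ell)$ is forced to $-1$ under \emph{both} boundary conditions, so $D_\ell(\eta^{(t)})\to 0$ on both sides (this is also what makes the integral $\int_\R D_\ell(\eta^{(t)})\,dt$ finite; your claimed limit would force it to diverge).

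The object your argument is missing is exactly the one the paper introduces: $G_\ell(\eta) := -\bigl[\mathcal E^{+}(\Lambda(3\ell))-\mathcal E^{-}(\Lambda(3\ell))\bigr]$, the \emph{unpinned} ground-state energy difference on the \emph{full} box $\Lambda(3\ell)$ between $+$ and $-$ outer boundary conditions. This is the quantity whose $\eta_v$-derivative the envelope argument computes as $\eps\bigl(\sigma^{\Lambda(3\ell),+}_v-\sigma^{\Lambda(3\ell),-}_v\bigr)$, so that the $t$-derivative is $2\eps\,D_\ell(\eta^{(t)})$. The link to the surface tension is the limit relation $\mathcal T_\ell(\eta)=\lim_{t\to\infty}\bigl[G_\ell(\eta^{(t)})-G_\ell(\eta^{(-t)})\bigr]$, which holds because when $|h+\eps\eta_v|>4J$ the ground-state spin at $v$ is determined by the local field sign, so the inner region decouples and the annulus energies $\mathcal E^{s,s'}$ emerge in the limit. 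Once you have that identity, integrating the $t$-derivative gives $\mathcal T_\ell(\eta)=2\eps\int_\R D_\ell(\eta^{(t)})\,dt$, and your Gaussian change-of-variables finishes the job. In short: your approach is conceptually aligned with the paper's, but the interpolating functional must be the energy difference between opposite boundary conditions on the \emph{full} box, not a pinned energy gap at a single boundary condition, and the boundary values you need are those of $G_\ell$, not of $D_\ell$.
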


(An alternative presentation of $\E_{\widehat \eta}$:  decomposing $\eta$ as a sum of two independent Gaussian fields $\eta_1, \eta_2$ with $\eta_1 \equiv \frac 1{\sqrt{| \Lambda (\ell)|}}\widehat \eta$ on $\Lambda(\ell)$, and $\eta_1 \equiv 0$ outside $\Lambda(\ell)$, the operation $\E_{\widehat \eta }$ represents conditional expectation, given $\eta_2$.)

\bigskip

\begin{proof}

To derive \eqref{DD2} we approach $\mathcal T_\ell(\eta) $ through another function, $G_\ell(\eta)$, which has already played a key role in the proof of the absence of symmetry breaking in the two-dimensional RFIM~\cite{AW89, AW90}.   Its zero-temperature version corresponds to the difference in the ground-state energies in $\Lambda(3\ell)$ between the $+$ and $-$ boundary conditions:
\begin{equation}\label{eq:G_def}
G_\ell(\eta) \, := \,
  -\left[\mathcal E^{+}(\Lambda(3\ell)) -  \mathcal E^{-}(\Lambda(3\ell))\right]
\end{equation}
with $\mathcal E^{\pm}(\Lambda(3\ell)):=H^{\Lambda(3\ell), \pm}(\sigma^{\Lambda(3\ell),\pm})$.\\

The two functions are linked by the relation
\be \label{eq:TandG}
\mathcal T_\ell(\eta) \ = \
  \lim_{t\to\infty}G_\ell (\eta^{(t)}) - G_\ell (\eta^{(-t)})
\ee
with $\eta^{(t)}$ defined by adding a uniform field of intensity $t$ in $\Lambda(\ell)$, as described in \eqref{def:t}.
Equality \eqref{eq:TandG} is based on the observation that if  $|h+\eps \eta_v| >  4J$
then $\sigma^{\Lambda(3\ell),\pm}_v $ are both given by ${\rm{sign}}(h+\eps\eta_v)$
 ($4$ appears here as the number of neighbors of $v$ in $\Z^2$).\\

The function
 $G_\ell(\eta)$ is Lipschitz continuous and non-decreasing in each of the coordinates of $(\eta_v)$, $v\in\Lambda(3\ell)$, with
\begin{equation}\label{eq:G_derivative_formula}
  \frac{\partial}{\partial\eta_v} G_\ell(\eta) = \eps
  \left[ \sigma^{\Lambda(3\ell),+}_v(\eta) - \sigma^{\Lambda(3\ell),-}_v(\eta) \right]  \, =\,  2 \eps\, \1_{\sigma^{\Lambda(3\ell),+}_v(\eta) \neq \sigma^{\Lambda(3\ell),-}_v(\eta) }
\end{equation}
for Lebesgue-almost-every $\eta$.   Combining this with \eqref{eq:TandG} one gets
\be\label{eq:surface_tension_via_derivative}
\begin{split}
\mathcal T_\ell(\eta) &= \eps\int_{-\infty}^\infty
\sum_{v \in \Lambda(\ell)} \left[ \sigma_v^{\Lambda(3\ell),+}(\eta^{(t)}) -
\sigma_v^{\Lambda(3\ell),-}(\eta^{(t)}) \right] \, dt\, = \, 2\eps
\int_{-\infty}^\infty D_{\ell}(\eta^{(t)})\, dt\, .
\end{split}
\ee
The shift by $t$ affects the random field's
 normalized sum over $\Lambda(\ell)$, which we denote by $\hat \eta = \sum_{v\in\Lambda(\ell)} \eta_v/  \sqrt{|\Lambda(\ell)|} $ but it  does not affect  the independently distributed  degrees of freedom which as Gaussian variables are orthogonal  to it, $\hat \eta^{(\perp)}$.

Writing $\eta= (\hat \eta, \hat \eta^{(\perp)}) $ and
$t\sqrt{ | \Lambda(\ell)|} = s$
the change $\eta \mapsto \eta^{(t)}$ corresponds to the shift
$(\hat \eta, \hat \eta^{(\perp)})  \mapsto  (\hat \eta +s, \hat \eta^{(\perp)}) $.  Since the component $\hat\eta$ has the standard Gaussian distribution, of density $\phi(\hat \eta)$, the above integral  can be rewritten as:
\begin{equation}\label{T_B2}
\begin{split}
&\int_{-\infty}^\infty D_{\ell}(\eta^{(t)})\, dt  =
\frac{1}{\sqrt{ | \Lambda(\ell)|} } \int_{-\infty}^\infty  D_{\ell}((\hat \eta +s , \hat \eta^{(\perp)})  \, d s
   = \frac{1}{\sqrt{ | \Lambda(\ell)|} } \int_{-\infty}^\infty  D_{\ell}((s , \hat \eta^{(\perp)})  \, d s    \\
 & = \frac{1}{\sqrt{ | \Lambda(\ell)|} } \int_{-\infty}^\infty  D_{\ell}((s , \hat \eta^{(\perp)})\phi( s)^{-1}\cdot\phi( s)  \, d s = \frac1 {\sqrt{ | \Lambda(\ell)|}}\, \E _{\widehat \eta } \left(  D_{\ell}(\eta) \,
\phi( \hat \eta)^{-1}\right)\, .
\end{split}
\end{equation}
\end{proof}

\smallskip

\subsection{Proof outline  for the RFIM ground states} \mbox{} \\[-1ex]

Influence percolation quantities  appear in both the surface tension formula \eqref{DD2} and the upper bound \eqref{T_B}.  The combination of these two  yields   the following relation, which underlies our analysis:

\begin{eqnarray}  \label{101}
 \frac{2 \,   \E(B_\ell(\eta))}{\eps \sqrt{|\Lambda(\ell)|}}
 & \geq &   \E \left(  \frac{D_{\ell}(\eta) }{| \Lambda(\ell)|}\,
 \frac{1 }{ \phi(\widehat \eta) }  \right) \,.
\end{eqnarray}

To motivate the direction which the discussion is about to take,
let us  note that \eqref{101}
allows a streamlined proof of the following  statement, which is among the significant results established in \cite{AW90}.

\begin{cor}
In the  two-dimensional RFIM with Gaussian random field, for any $\eps \neq 0$, the system has a unique ground-state configuration.
\end{cor}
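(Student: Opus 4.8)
The plan is to reduce the assertion to showing that $p:=\P\big(\sigma^+_\0\neq\sigma^-_\0\big)=0$, where $\sigma^{\pm}$ denote the maximal and minimal infinite-volume ground-state configurations, and then to extract $p=0$ from \eqref{101} by contradiction. For the reduction I would first note that, by the monotonicity \eqref{eq:FKG_consequence}, the pointwise limits $\sigma^{\pm}:=\lim_{L\to\infty}\sigma^{\Lambda(L),\pm}$ exist and are themselves infinite-volume ground states, while any infinite-volume ground state $\sigma$ minimizes $H^{\Lambda(L),\,\sigma|_{\extB\Lambda(L)}}$ on every $\Lambda(L)$, so \eqref{eq:FKG} forces $\sigma^{\Lambda(L),-}\le\sigma\le\sigma^{\Lambda(L),+}$ and hence $\sigma^-\le\sigma\le\sigma^+$. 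Thus the ground state is unique iff $\sigma^+=\sigma^-$ almost surely; since $\sigma^{\pm}$ are translation-covariant measurable functions of $\eta$, a union bound over the countably many vertices reduces this to $p=0$.

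Assume $p>0$. The i.i.d.\ field $\eta$ is ergodic under $\Z^2$-translations, so the spatial averages of the stationary indicator field $v\mapsto\1[\sigma^+_v\neq\sigma^-_v]$ converge to its mean: writing $D^\infty_\ell(\eta):=\sum_{v\in\Lambda(\ell)}\1[\sigma^+_v\neq\sigma^-_v]$, the ergodic theorem together with bounded convergence gives $D^\infty_\ell/|\Lambda(\ell)|\to p$ almost surely and in $L^1$ as $\ell\to\infty$. By the disagreement monotonicity \eqref{eq:disag_mon}, $D^\infty_\ell\le D_\ell$ pointwise. On the other hand \eqref{T_B} gives $B_\ell(\eta)\le 2J|\extB\Lambda(2\ell)|$, and for the planar nearest-neighbor model $|\extB\Lambda(2\ell)|=\Theta(\ell)$ while $\sqrt{|\Lambda(\ell)|}\ge c\,\ell$, so the left-hand side of \eqref{101} is bounded by a constant $\kappa=\kappa(J,\eps)$ uniformly in $\ell$. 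Inserting $1/\phi(x)=\sqrt{2\pi}\,e^{x^2/2}$ and $D^\infty_\ell\le D_\ell$ into \eqref{101} then yields
\[
\E\!\left(\frac{D^\infty_\ell(\eta)}{|\Lambda(\ell)|}\,e^{\widehat\eta^2/2}\right)\ \le\ \frac{\kappa}{\sqrt{2\pi}}\qquad\text{for every }\ell .
\]

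The contradiction is now a matter of integrability: $\widehat\eta$ is \emph{exactly} a standard Gaussian for every $\ell$, so $\E\big(e^{\widehat\eta^2/2}\big)=\int_\R(2\pi)^{-1/2}\,dx=\infty$, and this divergence is not absorbed by the bounded factor $D^\infty_\ell/|\Lambda(\ell)|$, which moreover concentrates at the positive constant $p$. I would make this precise by truncation: for $N\in\N$ set $g_N(x):=\min\!\big(e^{x^2/2},N\big)$; since $g_N$ is bounded, $\widehat\eta$ is standard Gaussian for every $\ell$, and $D^\infty_\ell/|\Lambda(\ell)|\to p$ in $L^1$, one has $\E\big(\tfrac{D^\infty_\ell}{|\Lambda(\ell)|}\,g_N(\widehat\eta)\big)\to p\,\E[g_N(Z)]$ for a standard Gaussian $Z$, whence $\liminf_{\ell\to\infty}\E\big(\tfrac{D^\infty_\ell}{|\Lambda(\ell)|}\,e^{\widehat\eta^2/2}\big)\ge p\,\E[g_N(Z)]$ for every $N$. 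Letting $N\to\infty$ and using monotone convergence, $\E[g_N(Z)]\uparrow\E\big[e^{Z^2/2}\big]=\infty$, which contradicts the displayed uniform bound. Hence $p=0$, i.e.\ the ground state is almost surely unique.

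The only genuinely substantive inputs are the two a priori estimates already established: the \emph{upper} bound on the surface tension (Theorem~\ref{thm:T1}), which in two dimensions makes the right-hand side of \eqref{101} bounded uniformly in $\ell$ precisely because $|\extB\Lambda(2\ell)|$ and $\sqrt{|\Lambda(\ell)|}$ are of the same order in $\ell$ — this is where $d=2$ is used — and the \emph{identity} of Theorem~\ref{thm:T2} expressing the surface tension as the $t$-integral of the disagreement count, which is what produces the Gaussian weight $1/\phi(\widehat\eta)$ in \eqref{101}. Everything after \eqref{101} is soft: a nonzero disagreement density is simply incompatible with the integrability of a truncation of $e^{\widehat\eta^2/2}$ that \eqref{101} forces. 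I expect the only points needing a little care to be the reduction step — checking that $\sigma^{\pm}$ are bona fide translation-covariant infinite-volume ground states, so that the ergodic theorem and the union bound apply — and the elementary bookkeeping that $|\extB\Lambda(2\ell)|$ and $\sqrt{|\Lambda(\ell)|}$ are both $\Theta(\ell)$; for a finite-range interaction one simply replaces \eqref{101} by its counterpart with the adjusted constants of Section~\ref{sec:rangeR}.
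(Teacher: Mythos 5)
Your proof is correct and follows the paper's route in all essentials: the same reduction of uniqueness to $\P(\sigma^+_\0\neq\sigma^-_\0)=0$, the same appeal to ergodicity of the translation-covariant factor $\1[\sigma^+_v\neq\sigma^-_v]$, the same monotonicity inequality $D^\infty_\ell\le D_\ell$ from \eqref{eq:disag_mon}, and the same uniform-in-$\ell$ bound on the left side of \eqref{101} that is special to $d=2$. The one place where you depart is the final passage to the limit: the paper invokes Fatou's lemma to pass from the almost-sure convergence of the spatial average to the conclusion $m(\infty)\cdot\infty\le CJ/\eps$, a step that is slightly delicate because the Gaussian weight $1/\phi(\widehat\eta)$ is attached to a variable $\widehat\eta=\widehat\eta_\ell$ that itself changes with $\ell$. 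Your truncation by $g_N=\min(e^{x^2/2},N)$ sidesteps this cleanly: with $g_N$ bounded, the $L^1$ convergence $D^\infty_\ell/|\Lambda(\ell)|\to p$ (which you correctly derive from the pointwise ergodic theorem plus bounded convergence) is all that is needed to get $\E\big(\tfrac{D^\infty_\ell}{|\Lambda(\ell)|}g_N(\widehat\eta_\ell)\big)\to p\,\E[g_N(Z)]$ for each fixed $N$, and monotone convergence in $N$ finishes the job. This buys a more transparent justification of the divergence at the cost of a few extra lines; it is a worthwhile tightening of the streamlined argument given in the paper. Your explicit check that \emph{every} infinite-volume ground state is bracketed by $\sigma^-$ and $\sigma^+$ (via the DLR-type characterization and \eqref{eq:FKG}) is also a point the paper leaves implicit, and it is good that you spell it out.
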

\begin{proof}
The monotonicity relations~\eqref{eq:FKG_consequence} imply that as the domains $\Lambda_n$ increase to $\Z^2$, the ground state $\sigma^{\Lambda_n, +}$ converges pointwise to a limiting ground state $\sigma^+$, which is, moreover, independent of the choice of exhausting sequence $\Lambda_n$. the ground state $\sigma^-$ is defined similarly with $-$ boundary conditions. The monotonicity relation~\eqref{eq:FKG} then shows that uniqueness of the ground state is equivalent to the vanishing of the quantity
\be   m(\infty) := \lim_{\ell \to \infty}   m(\ell)  \= \P\left(\sigma^{+}_v\neq \sigma^{-}_v\right) \, ,
\ee
where  $v$ is an arbitrary point in $\Z^2$.

The monotonicity relation~\eqref{eq:disag_mon} further allows to deduce from \eqref{101} that
\be
\frac{C \,J}{\eps }
    \geq \
  \E \left(  \left[ \frac{ 1}{ | \Lambda(\ell)|} \sum_{v\in \Lambda(\ell)}  \1[ \sigma^{+}_v\neq \sigma^{-}_v]
\right]
\,  \frac{1 }{ \phi(\widehat \eta) }  \right)  \,   ,
\ee
where $C>0$ is an absolute constant.

The pair of ground states $(\sigma^+, \sigma^-)$ form an ergodic process under translations (as a factor of the IID process $\eta$). This
allows to conclude that in the limit $\ell \to \infty$ the quantity
 $ \frac{ 1}{ | \Lambda(\ell)|} \sum_{v\in \Lambda(\ell)}  \1[ \sigma^{+}_v\neq \sigma^{-}_v] $
converges almost surely to its mean, which is $m(\infty)$.
Hence, using Fatou's lemma (for the second inequality)

\begin{equation} \label{contra}
\begin{split}
\frac{\rm{C} \, J}{\eps \,  }  &\ \geq\ \lim_{\ell \to \infty} \E \left(  \left[ \frac{ 1 }{ | \Lambda(\ell)|} \sum_{v\in \Lambda(\ell)}  \1[ \sigma^{+}_v\neq \sigma^{-}_v]
\right]
\,  \frac{1 }{ \phi(\widehat \eta) }   \right)  \ \geq \
 \, \, \E \left(  \frac{m(\infty) }{ \phi(\widehat \eta) }   \right) \ =  \\
 &\ =\   m(\infty)\, \int_{-\infty} ^ \infty 1\, dx \= m(\infty) \cdot \infty \, .
\end{split}
\end{equation}
This can hold true only if  $m(\infty) =0$.
\end{proof}

The ergodicity argument  is of not much help for the finite-volume bounds which are sought here.  It may however be substituted  by  more quantitative estimates, which are derived below under  the assumption that $m(\ell) \to 0$ at only  a sub-power slow rate.
To produce a contradiction which replaces \eqref{contra} we shall first show that  \eqref{101}  implies the following
anti-concentration bound.
\begin{prop}\label{prop:quantitative_lower_bound}
  For each integer $\ell\ge1$,
  \begin{equation}  \label{202}
  \P\left(\frac{D_\ell}{\E(D_\ell)} < \frac{1}{2}\right)\,\ge\, \chi\left(\frac{4J}{\eps}\cdot\frac{|\extB\Lambda(2\ell)|}{\sqrt{|\Lambda(\ell)|}}\cdot \frac{m(\ell - 1)}{m(4\ell)} \right)\,,
\end{equation}
where $\chi$ is the standard Gaussian distribution's two-sided tail \eqref{eq:Phi_def}.
\end{prop}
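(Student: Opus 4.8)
\textbf{Proof plan for Proposition~\ref{prop:quantitative_lower_bound}.}
The plan is to start from the key relation \eqref{101}, which controls the mean of $D_\ell/|\Lambda(\ell)|$ weighted by $1/\phi(\widehat\eta)$, and to convert it into an anti-concentration statement for $D_\ell$ around its mean. The left-hand side of \eqref{101} is bounded above using the second inequality in \eqref{T_B}, namely $B_\ell(\eta)\le 2J|\extB\Lambda(2\ell)|$, so that
\begin{equation*}
 \E\!\left(\frac{D_\ell(\eta)}{|\Lambda(\ell)|}\,\frac{1}{\phi(\widehat\eta)}\right)\ \le\ \frac{4J}{\eps}\cdot\frac{|\extB\Lambda(2\ell)|}{\sqrt{|\Lambda(\ell)|}}\,.
\end{equation*}
The point is that the weight $1/\phi(\widehat\eta)$ is enormous when $\widehat\eta$ is far from $0$; if $D_\ell$ were typically large (comparable to its mean) even on the event that $|\widehat\eta|$ is moderately large, the weighted expectation would blow up. So the first step is a lower bound on $\E(D_\ell)$: I would compare $D_\ell$ at field $\eta$ with $D_\ell$ at a shifted field, using that adding a uniform field of size $t$ in $\Lambda(\ell)$ can only increase disagreements (by monotonicity \eqref{eq:FKG}) up to the point where it saturates, and then relate $\E(D_\ell)$ to $m(\ell-1)$ and $m(4\ell)$. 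This is where the ratio $m(\ell-1)/m(4\ell)$ in \eqref{202} will enter: $D_\ell$ counts disagreements in $\Lambda(\ell)$ under $\Lambda(3\ell)$ boundary conditions, so $\E(D_\ell)\ge |\Lambda(\ell)|\cdot m(?)$-type lower bounds and $m(4\ell)$-type upper bounds on individual disagreement probabilities will combine to give the stated factor. (By FKG monotonicity in the domain, disagreements under $\Lambda(3\ell)$ dominate those under any larger domain, and $m(\ell-1)$ gives a lower bound for the central site, etc.)

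The second, main step is the conditioning/Chebyshev-type argument on the variable $\widehat\eta$. Write the weighted expectation as an iterated expectation: average over $\widehat\eta$ with the orthogonal degrees of freedom $\widehat\eta^{(\perp)}$ fixed, and recall from \eqref{T_B2} that $\E_{\widehat\eta}(D_\ell(\eta)\phi(\widehat\eta)^{-1}) = \sqrt{|\Lambda(\ell)|}\int_\R D_\ell(\eta^{(t)})\,dt$, i.e. the weighted average is literally an unweighted integral of $D_\ell$ over all shifts. Now suppose toward a contradiction that
\begin{equation*}
 \P\!\left(\frac{D_\ell}{\E(D_\ell)}<\frac12\right)\ <\ \chi(a),\qquad a:=\frac{4J}{\eps}\cdot\frac{|\extB\Lambda(2\ell)|}{\sqrt{|\Lambda(\ell)|}}\cdot\frac{m(\ell-1)}{m(4\ell)}\,.
\end{equation*}
Then $D_\ell\ge \tfrac12\E(D_\ell)$ with probability at least $1-\chi(a)$. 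Since the event $\{|\widehat\eta|\le a\}$ has probability exactly $1-\chi(a)$, these two events must intersect: there is positive-measure set of configurations with $|\widehat\eta|\le a$ \emph{and} $D_\ell\ge\tfrac12\E(D_\ell)$. But more is true via monotonicity — on the shift trajectory $t\mapsto\eta^{(t)}$ through such a configuration, $D_\ell$ stays $\ge\tfrac12\E(D_\ell)$ for all $t$ keeping $\widehat\eta$ in, say, the same region where disagreements are at least as large (here I would use that increasing the field only adds disagreements, so $D_\ell(\eta^{(t)})$ is non-decreasing until saturation, hence $\ge\tfrac12\E(D_\ell)$ for all $t\ge$ the value realizing $|\widehat\eta|\le a$, which is an interval of $s=t\sqrt{|\Lambda(\ell)|}$ of length $\gtrsim$ something). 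Integrating $D_\ell(\eta^{(t)})\ge\tfrac12\E(D_\ell)$ over a set of $t$'s of the appropriate size, dividing by $|\Lambda(\ell)|$, and comparing with the upper bound $4J|\extB\Lambda(2\ell)|/(\eps\sqrt{|\Lambda(\ell)|})$ gives, after plugging in the lower bound $\E(D_\ell)\gtrsim |\Lambda(\ell)| m(\ell-1)/\text{(something)}$ from step one, exactly the inequality $\chi^{-1}(\cdot)\ge a$ in reverse — a contradiction.

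\textbf{Main obstacle.} I expect the delicate point to be bookkeeping the two scales that produce the factor $m(\ell-1)/m(4\ell)$ in the argument of $\chi$: getting a clean \emph{lower} bound on $\E(D_\ell)$ in terms of $m(\ell-1)$ (which site-by-site requires knowing that under $\Lambda(3\ell)$ boundary conditions the disagreement probability at a site $v\in\Lambda(\ell)$ is at least $m$ at the distance from $v$ to $\partial\Lambda(3\ell)$, smallest of which is $\ge 2\ell$, not $\ell-1$ — so one must instead use that $D_\ell$ restricted appropriately dominates a disagreement event at scale $\ell-1$ through a different domain comparison) and an \emph{upper} bound on the per-site disagreement probability entering the Chebyshev step in terms of $m(4\ell)$. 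Threading these two estimates so that their ratio lands exactly on $m(\ell-1)/m(4\ell)$, rather than some other pair of scales, is the part that requires care; everything else is the soft conditioning-plus-monotonicity mechanism sketched above, in the spirit of the ergodic argument in \eqref{contra} but made quantitative by replacing the $\ell\to\infty$ limit with the anti-concentration of $\widehat\eta$ at finite $\ell$.
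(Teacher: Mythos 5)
Your starting point (relation \eqref{101} plus conditioning on $\widehat\eta$) is exactly right, and you have correctly identified that the weight $1/\phi(\widehat\eta)$ is the anti-concentration mechanism. But your route for converting \eqref{101} into the quantitative statement has a genuine gap, and the bookkeeping of the $m$-factors is misallocated.

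\textbf{The monotonicity claim fails.} Your contradiction step relies on the assertion that ``increasing the field only adds disagreements, so $D_\ell(\eta^{(t)})$ is non-decreasing until saturation.'' This is false. By \eqref{eq:FKG} the individual ground-state values $\sigma^{\Lambda(3\ell),\pm}_v(\eta^{(t)})$ are each non-decreasing in $t$, but the \emph{disagreement} indicator $\1[\sigma^+_v\neq\sigma^-_v]$ is the indicator of an \emph{interval} of $t$-values (between the flip time of $\sigma^-_v$ and that of $\sigma^+_v$): it equals $0$ for $t$ very negative, rises to $1$, then falls back to $0$ as $t\to+\infty$. Consequently $D_\ell(\eta^{(t)})$ first increases and then decreases, and the ``stays $\ge\tfrac12\E(D_\ell)$ for all $t$ from the realized value onward'' step has no foundation. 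More fundamentally, the overlap argument you sketch (the events $\{D_\ell\ge\E(D_\ell)/2\}$ and $\{|\widehat\eta|\le a\}$ both having probability $>1-\chi(a)$, hence intersecting) only yields a positive-measure intersection; it does not by itself give control over the size of $\int D_\ell(\eta^{(t)})\,dt$, which is what feeds back into \eqref{101}. The missing ingredient is the variational bound of Lemma~\ref{lem:min_integral}: writing $A:=\{D_\ell\ge\E(D_\ell)/2\}$ and $f(x):=\P(A\mid\widehat\eta=x)\in[0,1]$, one has $\E\big(\1[A]/\phi(\widehat\eta)\big)=\int_\R f(x)\,dx$ subject to $\int f\phi=\P(A)=1-p$, and the minimum of $\int f$ over such $f$ is exactly $2q$ with $\chi(q)=p$. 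Restricting the expectation in \eqref{101} to the event $A$ then gives $q\le \tfrac{2}{\eps}\E(B_\ell)|\Lambda(\ell)|^{1/2}/\E(D_\ell)$, hence $p=\chi(q)\ge\chi\big(\cdots\big)$ since $\chi$ is decreasing. No argument about the $t$-trajectory or monotonicity of $D_\ell$ is needed.

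\textbf{Misallocation of the $m$-factors.} You write that the factor $m(\ell-1)/m(4\ell)$ ``will enter'' through the lower bound on $\E(D_\ell)$. In fact the two factors enter separately: $m(4\ell)$ comes from the \emph{lower} bound $\E(D_\ell)\ge|\Lambda(\ell)|\,m(4\ell)$ (since $\Lambda(3\ell)\subset\Lambda_v(4\ell)$ for $v\in\Lambda(\ell)$, and domain monotonicity \eqref{eq:disag_mon} gives a per-site disagreement probability $\ge m(4\ell)$), whereas $m(\ell-1)$ comes from the \emph{upper} bound $\E(B_\ell)\le 2J|\extB\Lambda(2\ell)|\,m(\ell-1)$ (a vertex in $\extB\Lambda(2\ell)$ lies at distance $\ell$ from both the inner and outer boundaries of the annulus $\Lambda(3\ell)\setminus\Lambda(\ell)$, so a ball of radius $\ell-1$ around it fits inside, and \eqref{eq:disag_mon} bounds its disagreement probability by $m(\ell-1)$). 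You cannot obtain the $m(\ell-1)$ factor from the crude deterministic bound $B_\ell\le 2J|\extB\Lambda(2\ell)|$, which is what your displayed upper bound on the weighted expectation uses; you must bound $\E(B_\ell)$ probabilistically.

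Once these two corrections are made — using Lemma~\ref{lem:min_integral} in place of the monotone-trajectory argument, and taking the $m(\ell-1)$ from $\E(B_\ell)$ rather than from $\E(D_\ell)$ — the rest of your outline coincides with the paper's proof.
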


This  bound \eqref{202} will be contrasted  with a conditional concentration-of-measure estimate, derived through the following two steps.   For the convenience of presentation we summarize here the key statements, and postpone their proofs to the sections which follow.  \\

I)  Slow decay of a monotone sequence implies the existence of long stretches of somewhat comparable values:

\begin{prop}\label{prop:comp_decay}
For any monotone non-increasing sequence $(p_j)$ satisfying $0\le p_j\le 1$,  and any $\alpha>0$:  if for some  $k\ge 1$ it holds that
\be
p_k\ge k^{-\alpha}
\ee
  then there exists an integer~$n$ in the range $\sqrt{k}\le n\le k$ such that for all $1\le j \leq n$,
  \begin{equation}\label{eq:comp_dec}
    p_{n}\le p_j\le p_{n}\left(\frac{n}{j}\right)^{2\alpha} \,.
  \end{equation}
\end{prop}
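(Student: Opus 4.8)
The plan is to reduce the claim to its right-hand inequality, since $p_n\le p_j$ is immediate from monotonicity when $j\le n$, and then to produce the required $n$ by a single extremal choice rather than by any iterative scheme. Concretely, I would let $n$ be an integer in the range $\lceil\sqrt{k}\,\rceil\le n\le k$ at which the product $p_m\,m^{2\alpha}$ attains its maximum over that finite, nonempty set of integers; this $n$ automatically lies in $[\sqrt k,k]$ as required. The point is that the exponent $2\alpha$ (twice the decay exponent) and the window $[\sqrt k,k]$ are calibrated exactly so that this one maximizer works for all $1\le j\le n$ simultaneously.

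I would then verify the bound in two regimes according to whether $j$ is above or below $\sqrt k$. For $\lceil\sqrt{k}\,\rceil\le j\le n$ the index $j$ itself belongs to the set over which the maximum was taken, so $p_j\,j^{2\alpha}\le p_n\,n^{2\alpha}$, which rearranges to exactly $p_j\le p_n(n/j)^{2\alpha}$. For $1\le j<\lceil\sqrt{k}\,\rceil$ the inequality becomes trivial: since $k$ also lies in the maximization range, maximality together with the hypothesis $p_k\ge k^{-\alpha}$ gives $p_n\,n^{2\alpha}\ge p_k\,k^{2\alpha}\ge k^{\alpha}$, while $j<\sqrt k$ forces $j^{2\alpha}<k^{\alpha}$, whence $p_n(n/j)^{2\alpha}=p_n\,n^{2\alpha}\,j^{-2\alpha}>1\ge p_j$. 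Combining the two regimes covers every $1\le j\le n$, and this yields \eqref{eq:comp_dec}.

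I do not expect a genuine obstacle here; the argument is essentially a pigeonhole-over-scales observation packaged as an extremal choice, and the only care needed is light bookkeeping at the endpoints: nonemptiness of $\{\,m\in\Z:\lceil\sqrt k\,\rceil\le m\le k\,\}$, which holds for every $k\ge1$ since $k$ is an integer with $k\ge\sqrt k$; the elementary implication $j\le\lceil\sqrt k\,\rceil-1\Rightarrow j^2<k$; and the degenerate small-$k$ cases where the window collapses, all of which are absorbed by the second regime above. If one prefers, the same $n$ can be described as a running-maximum location of $j\mapsto\log p_j+2\alpha\log j$, but the extremal formulation above seems the most economical route.
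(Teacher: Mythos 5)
Your proof is correct, and it takes a genuinely more direct route than the paper's. The paper locates $n$ through a descending greedy chain $k=k_0>k_1>\cdots>k_t$, where each $k_m$ is the largest index below $k_{m-1}$ at which $p_j\,j^{2\alpha}$ strictly exceeds $p_{k_{m-1}}\,k_{m-1}^{2\alpha}$; the construction stops at an index satisfying the required domination, and the lower bound $k_t\ge\sqrt{k}$ is extracted afterwards by telescoping the strict increases along the chain against $p_{k_t}\le 1$ and $p_k\ge k^{-\alpha}$. Underneath, that chain is just a roundabout way of arriving at the (largest) maximizer of $j\mapsto p_j\,j^{2\alpha}$ over $[1,k]$. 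You cut straight to that point: pick the argmax of $p_m\,m^{2\alpha}$, but over the pre-restricted window $[\lceil\sqrt{k}\,\rceil,k]$, so that $n\ge\sqrt{k}$ holds by fiat; maximality handles $j$ inside the window, and for $j<\sqrt{k}$ you observe that $p_n(n/j)^{2\alpha}>1\ge p_j$ once one notes $p_n\,n^{2\alpha}\ge p_k\,k^{2\alpha}\ge k^{\alpha}>j^{2\alpha}$. This buys you a one-shot extremal argument with no induction and no separate lower-bound step, at the cost of splitting the conclusion into two regimes; the hypothesis $p_k\ge k^{-\alpha}$ enters in exactly the same place in both proofs. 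The endpoint bookkeeping you flag (nonemptiness of the window, $j\le\lceil\sqrt k\,\rceil-1\Rightarrow j^2<k$) is all sound.
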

The proposition will be employed with $(m(j))$ as the sequence $(p_j)$.

\mbox{} \\

II)  A conditional variance bound:

\begin{prop}\label{prop:var_bound}
  For each $0<\alpha\le\frac{1}{4}$ there exists $L_0>0$ such that the following holds for all integer $L\ge L_0$. If
  \be
  m(L)\ge L^{-2\alpha}
  \ee  and
  \begin{equation}\label{eq:assumed_decay}
    m(L) \le m(j)\le m(L) \left(\frac{L}{j}\right)^{2\alpha},\quad 1\le j\le L
  \end{equation}
  then
  \begin{equation} \label{eq:Var_bnd}
    \Var\big(D_{  \lfloor L/4\rfloor }\big) \  \le \  241 \cdot \alpha\cdot \big(\E \left( D_{ \lfloor L/4\rfloor}\right) \big)^2.
  \end{equation}
  \end{prop}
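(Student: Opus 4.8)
\textbf{Proof proposal for Proposition~\ref{prop:var_bound}.}

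The plan is to bound the variance of $D_{\lfloor L/4\rfloor}$ by decomposing it according to the random field coordinates, which is the natural tool given that $D_\ell$ is (via~\eqref{eq:G_derivative_formula}) a sum of indicators closely tied to the derivatives of $G_\ell$. Concretely, $D_{\lfloor L/4\rfloor}(\eta)$ is a monotone nondecreasing, square-integrable function of the independent Gaussians $(\eta_v)_{v\in\Lambda(3\lfloor L/4\rfloor)}$, so I would apply a Gaussian Poincar\'e / Efron--Stein-type inequality: $\Var(D) \le \sum_v \E\big[(\partial_{\eta_v} D)^2\big]$ would be the cleanest route if $D$ were Lipschitz, but since $D$ is an integer-valued step function one should instead use the martingale (coordinate-by-coordinate) decomposition $\Var(D) = \sum_v \E\big[\big(\E[D\mid\mathcal F_v] - \E[D\mid\mathcal F_{v-1}]\big)^2\big]$, where $\mathcal F_v$ reveals the coordinates up to $v$ in some fixed ordering. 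Each increment is controlled by the conditional fluctuation of $D$ under resampling $\eta_v$, which by monotonicity in $\eta_v$ is bounded by the jump of $D$ as $\eta_v$ ranges over $\R$.

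The key mechanism is that flipping or varying a single field coordinate $\eta_v$ changes the disagreement set only through a percolation-type cascade: the number of sites in $\Lambda(\ell)$ whose disagreement status can change when $\eta_v$ is varied is itself a disagreement-percolation quantity at a smaller scale centered near $v$. So I would estimate $\E\big[(\text{increment at }v)^2\big]$ by something like $\E[(\text{local disagreement cluster near }v)\cdot(\text{indicator that }v\text{'s status is sensitive})]$, and then sum over $v\in\Lambda(\ell)$. Here is where the hypotheses enter: the comparability assumption~\eqref{eq:assumed_decay} lets one replace $m$ at the various intermediate scales $j$, $L/j$ etc. by $m(L)$ up to a factor $(L/j)^{2\alpha}$, and the lower bound $m(L)\ge L^{-2\alpha}$ together with $\E(D_\ell)\asymp |\Lambda(\ell)|\, m(\ell)$ (up to the monotonicity comparisons~\eqref{eq:disag_mon} relating $m$ at nearby scales) turns the summed-up bound into a constant multiple of $\alpha\cdot(\E D)^2$. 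The factor $\alpha$ should emerge from summing a series like $\sum_{j\le L} (L/j)^{-(1-4\alpha)}\cdot\frac1j$ or from the logarithmic width of the range of scales over which comparability holds — a geometric/harmonic sum whose value is $O(\alpha)$ precisely because of the power $2\alpha$ in~\eqref{eq:assumed_decay}; the explicit constant $241$ is then just bookkeeping.

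I would organize the argument as: (i) write the martingale decomposition of $\Var(D_{\lfloor L/4\rfloor})$ over the finitely many relevant field coordinates; (ii) bound the $v$-th increment by the $L^2$-norm of the change in $D$ under an independent resampling of $\eta_v$, and use monotonicity in $\eta_v$ to express this as (twice) the expected number of sites in $\Lambda(\ell)$ that change disagreement status, restricted to the event that $v$ itself lies on the disagreement interface — this couples a ``near $v$'' small-scale cluster to a ``near $\0$'' influence; (iii) invoke the monotonicity and decoupling estimates (in the spirit of the reasoning behind~\eqref{Es1}--\eqref{eq:mcorr} and~\eqref{eq:disag_mon}) to factor the expectation over well-separated regions, producing a product $m(\text{small scale near }v)\cdot m(\text{scale to }\0)$ summed over $v$; (iv) apply~\eqref{eq:assumed_decay} and $m(L)\ge L^{-2\alpha}$ to evaluate the sum and collect it into $\le 241\,\alpha\,(\E D)^2$.

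The main obstacle I anticipate is step (ii)--(iii): controlling the single-coordinate increment of $D_{\lfloor L/4\rfloor}$ and showing it is genuinely a product of two disagreement probabilities at controlled scales, rather than something that could be as large as $D$ itself. One has to argue that when $\eta_v$ is resampled, the only sites of $\Lambda(\ell)$ whose status can flip are those reachable from $v$ by a disagreement path (a consequence of the ground-state monotonicity and the finite-range interaction), and simultaneously that for the increment to be nonzero $v$ must be ``pivotal,'' i.e. connected by disagreement to the outer boundary at distance $\sim \ell$; combining these two constraints with a decoupling estimate to break the expectation into a product is the technical heart. A secondary nuisance is that $D$ is not Lipschitz, so one must be careful to use the resampling (Efron--Stein) form of the variance bound rather than a gradient bound, and to handle the fact that the relevant sum over scales must be truncated at scale $1$ and at scale $L$, which is exactly where the hypotheses $m(L)\ge L^{-2\alpha}$ and the two-sided comparability~\eqref{eq:assumed_decay} are used.
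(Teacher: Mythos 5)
The paper's proof takes a genuinely different and considerably simpler route. It does not touch the field coordinates at all: it expands the variance directly over pairs of sites,
\begin{equation*}
  \Var(D_\ell)\ =\ \sum_{v,w\in\Lambda(\ell)}\bigl[\P(E_v\cap E_w)-\P(E_v)\P(E_w)\bigr],
  \qquad E_v:=\bigl\{\sigma^{\Lambda(3\ell),+}_v\neq\sigma^{\Lambda(3\ell),-}_v\bigr\},
\end{equation*}
and the one decoupling input is the two-point estimate $\P(E_v\cap E_w)\le m(r(v,w))^2$ with $r(v,w)=\lfloor(d(v,w)-1)/2\rfloor$. This follows because, by the monotonicity \eqref{eq:disag_mon}, $E_v$ forces disagreement already in the smaller ball $\Lambda_v(r(v,w))$, likewise for $w$, and the two smaller-ball events depend on the fields in disjoint regions, hence are independent. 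Combined with the lower bound $\P(E_v)\ge m(4\ell)\ge m(L)$ and the hypothesis \eqref{eq:assumed_decay}, the off-diagonal sum is bounded by $m(L)^2\sum_{v,w}\bigl[(L/r(v,w))^{4\alpha}-1\bigr]$, which an elementary sum-to-integral computation puts at $O(\alpha)\,m(L)^2|\Lambda(\ell)|^2$; the factor $\alpha$ appears simply because most pairs are at distance of order $\ell$, where $(L/r)^{4\alpha}=1+O(\alpha)$. No single-coordinate resampling, no martingale filtration, and no local-cluster geometry enter.

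In your proposal there is a real gap at step (ii). The Efron--Stein/martingale contribution of coordinate $v$ is $\E\bigl[(D(\eta)-D(\eta^{(v)}))^2\bigr]$, the \emph{second} moment of the number of sites whose disagreement status changes upon resampling $\eta_v$, but you replace it by ``(twice) the expected number of sites \dots\ that change disagreement status,'' i.e.\ by a \emph{first} moment. The change $D(\eta)-D(\eta^{(v)})$ is the (signed) size of a cluster, and its second moment can be far larger than its first moment; in exactly the critical/slow-decay regime one is trying to rule out, that heavy tail is not controlled by the hypotheses at hand, so the step does not go through as stated. A secondary problem is the localization claim: the standard monotone-cascade argument gives a connecting path along sites where $\sigma^+$ (or $\sigma^-$) has changed, not a path of sites on the disagreement interface---a site can pass from $(-1,-1)$ to $(+1,+1)$, changing along the cascade while never disagreeing---so ``reachable from $v$ by a disagreement path'' is not the right description of which sites can flip status, and the intended pivotality$\times$decoupling factorization would need to be reformulated. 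Finally, even granting (ii) and (iii), it is not clear that the resulting sum over the $\sim\ell^2$ coordinates produces the $\alpha$-factor: in the paper that factor comes from the specific combinatorics of pairwise distances in $\Lambda(\ell)$, a mechanism your single-site sum does not obviously reproduce.
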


\mbox{ } \\

Combining Proposition~\ref{prop:comp_decay} and Proposition~\ref{prop:var_bound} with the assumption of sub-power decay of $(m(j))$ shows the existence of an infinite sequence of $L$s for which \eqref{eq:assumed_decay} and \eqref{eq:Var_bnd} hold. With $\ell=\lfloor L/4\rfloor$, Chebyshev's inequality and~ \eqref{eq:Var_bnd} imply that along this sequence
\begin{equation} \label{303}
    \P\left(\frac{D_\ell}{\E \big( D_\ell \big)} < \frac{1}{2}\right) \,
   \le \, 1000\alpha.
\end{equation}
At the same time, for $\alpha \to 0$, the ratio $m(\ell-1)/m(4\ell)$ tends to $1$ by \eqref{eq:assumed_decay}, and the argument of $\chi$ in \eqref{202} is bounded by $\rm{Const.} J/\eps$, uniformly in $\ell$ and $\alpha$.
Hence, for  small enough $\alpha>0$, \eqref{303}   is in contradiction with the anti-concentration bound \eqref{202}.   \\

The above line of reasoning  allows  to conclude that the initial assumption of sub-power decay is false.  A quantitative version of the argument, proving   the zero-temperature case of Theorem~\ref{thm:power_law_bound},  is presented in Section~\ref{sec:nailing_T_0} after the derivation of the above three propositions.

\bigskip

\subsection{The anti-concentration estimate} \mbox{} \\[-1ex]

In the proof of Proposition~\ref{prop:quantitative_lower_bound} we shall make use of the following variational principle.

\begin{lemma}\label{lem:min_integral}
Let  $w: \R \mapsto [0,\infty) $ be a symmetric ($w(-x) = w(x)$), non-increasing in $|x|$, probability density function on $\R$, i.e. satisfying $\int_\R w(x) dx =1$.
Then, for any
$ p \in (0, 1]$,
  \be \label{eq:min}
      \min \Big\{\int_\R f(x)\,  dx \,  \quad {\Big |}  \, \,  0\le f \le 1\,,\,
      \int_\R f(x) \, w(x) \, dx \= 1- p {\Big\}} \,
    = \, 2 \, q
\ee
where the variation is over measurable functions satisfying the stated conditions, and   $q$ is the unique value related to $p$ by
 \be \label{eq:W_q_p}
\int_{|x|> q} w(x) \, dx\=   p.
\ee
\end{lemma}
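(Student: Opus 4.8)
The plan is to solve the constrained minimization by a \emph{rearrangement / bathtub} argument. First I would observe that since $w$ is symmetric and non-increasing in $|x|$, the super-level sets $\{w > c\}$ are exactly the symmetric intervals $(-r,r)$. This suggests that to make $\int f\,w$ as large as possible relative to $\int f$, one should concentrate the ``mass'' of $f$ where $w$ is largest, i.e. on a symmetric interval around the origin. So the natural candidate minimizer is $f^* = \1_{[-q,q]}$, with $q$ chosen so that the constraint $\int f^* w = 1-p$ holds; by \eqref{eq:W_q_p} and $\int_\R w = 1$ this is precisely $\int_{|x|>q} w = p$, and $q$ is unique because $r \mapsto \int_{|x|>r} w(x)\,dx$ is continuous and strictly decreasing on the support of $w$ (with the usual caveat that if $w$ has flat pieces one may need $q$ on the boundary of such a piece — continuity still pins it down). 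For this $f^*$ we get $\int f^* = 2q$, matching the claimed value.

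Next I would prove optimality, i.e. that any admissible $f$ has $\int f \ge 2q$. The clean way is a direct exchange inequality. Suppose $0 \le f \le 1$ with $\int f\, w = 1-p = \int f^*\, w$. Consider the difference $g := f - f^*$. On $[-q,q]$ we have $f^* = 1 \ge f$, so $g \le 0$ there; on $|x| > q$ we have $f^* = 0 \le f$, so $g \ge 0$ there. Hence $g(x)\cdot(w(x) - w(q)) \le 0$ pointwise: where $g \le 0$ we have $|x|\le q$ so $w(x) \ge w(q)$; where $g \ge 0$ we have $|x| \ge q$ so $w(x) \le w(q)$. (Here I should interpret $w(q)$ as, say, $\lim_{x \uparrow q} w(x)$ or any value between the left and right limits; monotonicity makes the sign statement robust to this choice.) Therefore
\be
0 \ge \int_\R g(x)\,(w(x) - w(q))\,dx = \int_\R g\, w - w(q)\int_\R g = 0 - w(q)\int_\R g,
\ee
using $\int g\, w = \int f w - \int f^* w = 0$. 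If $w(q) > 0$ this forces $\int_\R g \ge 0$, i.e. $\int f \ge \int f^* = 2q$, which is exactly the desired lower bound. If instead $w(q) = 0$, then $q$ lies at or beyond the edge of the support; in that degenerate case the constraint $\int f w = 1 - p$ with $p \in (0,1]$ still forces $\int_{|x|>q} w = p$, and one argues directly that $f$ must put enough mass inside $\{w>0\} \subseteq [-q,q]$ — actually here $1-p < 1$ is automatically satisfied and $q$ can be taken as the essential supremum of the support, giving $2q$ possibly infinite but consistent. I would handle this boundary case with a short separate remark rather than folding it into the main inequality.

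Combining the two parts: $f^*$ is admissible and achieves $2q$, and every admissible $f$ satisfies $\int f \ge 2q$, so the minimum equals $2q$ and is attained. The main obstacle, such as it is, is purely bookkeeping around the non-strict monotonicity of $w$ and the possibility that $w$ vanishes or is flat near $|x| = q$: one must be careful that $q$ is well-defined by \eqref{eq:W_q_p} (continuity of the tail integral gives existence; strict monotonicity \emph{on the support} gives uniqueness of the relevant value, and flat zero-tails outside the support don't matter since they contribute nothing to either integral) and that the pointwise sign inequality $g\cdot(w - w(q)) \le 0$ is stated with the right convention at the single point $|x|=q$, which has measure zero anyway. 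Everything else is the standard rearrangement exchange argument and is routine.
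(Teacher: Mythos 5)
Your proposal is correct and is essentially the same argument as the paper's: the paper's chain of inequalities (adding and subtracting $\int_{|x|\le q}[1-f]$ and $\int_{|x|>q}f$, then multiplying and dividing by $w(q)$) is precisely your pointwise exchange inequality $(f - \1_{[-q,q]})(w(x)-w(q))\le 0$ integrated out, together with $\int fw = \int \1_{[-q,q]}w$. You go a bit further by explicitly flagging the degenerate case $w(q)=0$, which the paper's division by $w(q)$ tacitly assumes away.
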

\begin{proof}
For each test function satisfying the conditions in \eqref{eq:min},
\begin{eqnarray}  \label{eq:aux_lemma}
\int_\R f(x) dx &=&  2 q  \ -  \  \int_{|x| \le q} [1-f(x) ]\, dx  \ + \ \int_{|x| > q} f(x) \, dx \notag \\
&\geq &  2 q  \ -  \ \frac{1}{w(q)} \left[  \int_{|x| \le q} [1-f(x) ] w(x) \, dx  \ - \ \int_{|x| > q} f(x) \, w(x) \, dx \right] \notag \\
&= &  2 q  \ +  \ \frac{1}{w(q)} \left[  \int_{|x| \le q}  w(x) \, dx  \ - \ \int_\R  f(x) w(x) \, dx \right] \ = \ 2q.
\end{eqnarray}
Equality in \eqref{eq:min} is attained for the indicator function $f(x) = \1_{[-q,q]}(x)$.
\end{proof}
\medskip

\noindent {\bf Remarks:}   1) For a structural grasp of Lemma~\ref{lem:min_integral} one may note that by a rearrangement argument it suffices to restrict the variation there to $f$ which are also symmetric and non-increasing in $|x|$.   A convexity argument allows to further restrict to extreme points in the convex set of admissible functions.  These are functions satisfying the constraints  but taking (almost everywhere) only the values $0$ and $1$.  These two conditions single out the indicator function $\1_{[-q,q]}(x)$ (or $\1_{(-q,q)}(x)$), and thereby imply that it is a minimizer for \eqref{eq:min}.  \\

2)
The assumptions of symmetry and monotonicity of the probability density $w(x)$  are not essential, and upon the natural reformulation of \eqref{eq:W_q_p} can be omitted. They are however satisfied by the Gaussian density function $\phi(x)=e^{-x^2/2}/\sqrt{2\pi}$.\\

The above will next be used to prove the stated estimate.

\begin{proof}[Proof of Proposition \ref{prop:quantitative_lower_bound}]
Let $A$  be the event $\big\{ \eta   :   D_\ell \geq     \E(D_\ell)/2 \, \big \} $  and let us denote its probability as $1-p$, i.e.
\be
 \Pr\Big( D_\ell \geq    \frac 1 2  \E(D_\ell) \Big) \=   \Pr(A) \=  1-p \,.
\ee
From   \eqref{101} one may deduce:

\begin{eqnarray}  \label{1012}
 \frac{2 \,}{\eps} \, \frac{ \E(B_\ell)}{ \sqrt{|\Lambda(\ell)|}}
\,  \frac{|\Lambda(\ell)|}{\E(D_\ell)}
 & \geq &   \frac 12\, \E   \left(  \1[A] \,
 \frac{1 }{ \phi(\widehat \eta) }  \right) \, .
\end{eqnarray}
Expressed in terms of the conditional probability of $A$, conditioned on $\widehat \eta$, the term on the right is, by Lemma~\ref{lem:min_integral},
\begin{eqnarray}
 \label{var_bound_2}
 \E \left(  \1[A] \,
 \frac{1 }{ \phi(\widehat \eta) }  \right) \    & = &
  \int_{-\infty} ^\infty \P \big( A \big| \, \widehat \eta = x \big) \, dx  \notag \\
  & \geq &
  \min \Big\{ \int_{-\infty} ^\infty  f(x) \, dx  \, \Big|  \,  0 \leq f\leq 1\, , \,
\int_\R  f(x) \, \phi(x) dx = 1-p \Big \} \notag \\[2ex]
& \geq &
\,  2 \, q
\end{eqnarray}
with $q$ defined by:
\be
\chi(q) \ \equiv \  \int_{|x|>q} \phi(x) \, dx \= p \,.
\ee

Combining \eqref{var_bound_2} with \eqref{1012} we learn that
\be
\frac{2 \,}{\eps} \, \frac{ \E(B_\ell)}{ \sqrt{|\Lambda(\ell)|}}
\,  \frac{|\Lambda(\ell)|}{\E(D_\ell)}
 \ \geq \  q \,.
\ee
Hence
\be
\chi\Big(\frac{2 \,}{\eps} \, \frac{ \E(B_\ell)}{ \sqrt{|\Lambda(\ell)|}}
\,  \frac{|\Lambda(\ell)|}{\E(D_\ell)} \Big)  \ \leq \ \chi(q) \= p \= \Pr\Big ( D_\ell <    \frac 12 \E(D_\ell)  \Big).
\ee
To obtain the conclusion \eqref{202} of the proposition, it remains to note that, by the definitions \eqref{eq:p_L_def}, \eqref{eq:D_l_def}, \eqref{eq:B_l_def} of $m(j)$, $D_\ell$ and $B_\ell$, together with the monotonicity inequality \eqref{eq:disag_mon},
\begin{align}
  \E(B_\ell) &\le 2 J \, |\extB\Lambda(2\ell)|\,m(\ell-1),\label{eq:B_ell_bound}\\
  \E(D_\ell) &\ge |\Lambda(\ell)|\,m(4\ell).\qedhere
\end{align}
 \end{proof}

 \bigskip

\subsection{Implications of slow decay}\mbox{} \\[-1ex]

We next show that slow decay of a monotone sequence implies the existence of long stretches of somewhat comparable values.

\begin{proof}[Proof of Proposition~\ref{prop:comp_decay}]
Assume that for some $k$ and $\alpha >0$
\be\label{eq:p_k_assumption}
p_k\ge k^{-\alpha} \,.
\ee

  As $(p_j)$ is non-increasing we need only prove the right-hand inequality in \eqref{eq:comp_dec}. Define a sequence $k =: k_0 > k_1 > \cdots > k_t$    inductively by letting $k_m$ be the maximal integer in $(0,k_{m-1})$ such that $p_{k_m}> p_{k_{m-1}}\left(\frac{k_{m-1}}{k_m}\right)^{2\alpha}$ provided such an integer exists, and denoting by $t$  the first value of $m$ beyond which the construction cannot proceed.  By construction,   for all $0< j\le k_t$: $p_j\le p_{k_t}\left(\frac{k_t}{j}\right)^{2\alpha}$.   If $t=0$ the claim follows  with $n := k$. Otherwise, using~\eqref{eq:p_k_assumption},
  \begin{equation}
    1\ge p_{k_t}>p_{k_{t-1}}\left(\frac{k_{t-1}}{k_t}\right)^{2\alpha}>p_{k_{t-2}}\left(\frac{k_{t-2}}{k_t}\right)^{2\alpha}>\cdots>p_k\left(\frac{k}{k_t}\right)^{2\alpha}\ge \left(\frac{\sqrt{k}}{k_t}\right)^{2\alpha}
  \end{equation}
  so that $k_t\ge \sqrt{k}$ and the claim \eqref{eq:comp_dec} holds true with $n:=k_t$.
 \end{proof}

\mbox{}

Next we turn to the implications of slow decay on the variance of the size of the disagreement set, $ \Var (D_\ell)$.

\begin{proof}[Proof of Proposition~\ref{prop:var_bound}]
Assume  that $m(L)\ge L^{-2\alpha}$, and that
\eqref{eq:assumed_decay} holds for all $1\le j\le L
$.    Throughout the proof we set
\begin{equation*}
  \ell := \lfloor L/4\rfloor.
\end{equation*}
For  $v\in \Lambda(\ell)$ let $E_v$ denote the event
$\{ \eta \, :\, \sigma^{\Lambda(3\ell),+}_{v}(\eta) \neq  \sigma^{\Lambda(3\ell),-}_v(\eta)\}$.  In this notation:
\begin{equation}\label{eq:var_as_sum_over_pairs}
  \Var\left(D_\ell \right) = \sum_{v,w\in \Lambda(\ell)}  \big[\P(E_v\cap E_w) - \P(E_v)\P(E_w)\big]\, .
\end{equation}
We proceed to bound the terms in this sum.  \\

By the FKG monotonicity \eqref{eq:disag_mon} and the definition~\eqref{eq:p_L_def} of $(m(j))$,  for any site
$v \in \Lambda(\ell)$,
\be \label{lower_E_v}
\P(E_v) \ \geq \   m(4\ell)\ge m(L)
 \ee
and for any pair $v, w \in \Lambda(\ell)$, $v\neq w$,
\be \label{eq:uv}
  \P(E_v\cap E_w) \< m(r(v,w))^2
\end{equation}
with
\begin{equation}\label{eq:r_v_w_def}
  r(v,w):=\lfloor (d(v,w)-1)/2\rfloor
\end{equation}
and $d(v,w)$ the distance between the two sites.   The bound~\eqref{eq:uv}
holds since if both $v$ and $w$ are affected by boundary conditions placed  outside of $\Lambda(3\ell)$ then each spin is necessarily affected also by boundary conditions placed at distance $r(v,w)$ from the site. However, these two events are independent, since they depend only on the random fields in a pair of disjoint neighborhoods of $v$ and $w$.

 For pairs at distance $d(v,w) \< 2$ we shall employ the simpler bound:
\begin{equation}\label{eq:E_v_E_w_simple_bound}
  \P(E_v \cap E_w) - \P(E_v)\P(E_w)\le \P(E_v) \le m(2\ell).
\end{equation}

Thus
under the assumption \eqref{eq:assumed_decay} we get
\begin{equation}\label{eq:variance_second_bound}
\begin{split}
  \Var\left( D_\ell \right)&\le |\Lambda(2)|\cdot|\Lambda(\ell)| m(2\ell) + \sum_{\substack{v,w\in \Lambda(\ell)\\d(v,w)\ge 3}} \left(m(r(v,w))^2 - m(L)^2\right)\\
   &\le |\Lambda(2)|\cdot|\Lambda(\ell)| m(2\ell) + m(L)^2\sum_{\substack{v,w\in \Lambda(\ell)\\d(v,w)\ge 3}} \left(\left(\frac{L}{r(v,w)}\right)^{4\alpha} - 1\right).
\end{split}
\end{equation}
The sum in the last bound can be estimated through the  observation that most pairs $v,w\in\Lambda(\ell)$ are at distance of order $\ell$, in which case $\frac{L}{r(v,w)}$ is of order $1$. As $\alpha$ is small,  for such pairs
$\big(\frac{L}{r(v,w)}\big)^{4\alpha} - 1$ is of order $\alpha$.  This leads to a bound of order $\alpha\, m(L)^2 L^4$ on the variance, which in light of
\eqref{lower_E_v}  is of the order $\alpha \Big(\E(D_\ell)\Big)^2$.

We proceed to make this argument precise. We first note that
\begin{equation}\label{eq:sum_to_integral}
\begin{split}
  &\sum_{\substack{v,w\in \Lambda(\ell)\\d(v,w)\ge 3}} \left(\left(\frac{L}{r(v,w)}\right)^{4\alpha} - 1\right) = \sum_{j=1}^{\ell}|\{(v,w)\subseteq \Lambda(\ell)\,:\,r(v,w)=j\}| \left(\left(\frac{L}{j}\right)^{4\alpha} - 1\right)\\
  &\le |\Lambda(\ell)| \sum_{j=1}^{\ell}32j \left(\left(\frac{L}{j}\right)^{4\alpha} - 1\right).
\end{split}
\end{equation}
For large $\ell$ and $0<\alpha\le \frac{1}{4}$
\begin{equation*}
\begin{split}
  \sum_{j=1}^{\ell}j \left(\left(\frac{L}{j}\right)^{4\alpha} - 1\right) & \leq \ \int_{1}^{\ell+1} L^{4\alpha}x^{1-4\alpha}dx - \int_0^{\ell} x dx \\
  &\leq \  \frac{\ell^2}{2}\left[\frac{2}{2-4\alpha}\cdot\left(\frac{L}{\ell+1}\right)^{4\alpha}\cdot\left(\frac{\ell+1}{\ell}\right)^2-1\right]\ \leq\   15\, \alpha \, \ell^2.
\end{split}
\end{equation*}
 Substituting this into \eqref{eq:variance_second_bound}, along with $|\Lambda(r)|\ge 2r^2$,  we conclude  that
\begin{equation}\label{eq:var bound with L}
\begin{split}
  \Var\left(D_\ell \right)&\ \le \  |\Lambda(2)|\cdot|\Lambda(\ell)|m(2\ell) + m(L)^2 \cdot |\Lambda(\ell)|\cdot32\cdot15\alpha \ell^2\\
   &\ \le\ |\Lambda(2)|\cdot|\Lambda(\ell)|m(2\ell) + 240 \alpha \left(m(L) |\Lambda(\ell)|\right)^2.
\end{split}
\end{equation}

It remains to observe that, by \eqref{lower_E_v},
\begin{equation*}
  \E \big( D_\ell \big) \ \ge \ m(L) |\Lambda(\ell)|.
\end{equation*}
Moreover, by our assumptions that $m(L)\ge L^{-2\alpha}$ and that \eqref{eq:assumed_decay} holds,
\begin{equation*}
  |\Lambda(2)|\cdot|\Lambda(\ell)|m(2\ell)\  \le\  |\Lambda(2)|\cdot|\Lambda(\ell)|m(L) \left(\frac{L}{2\ell}\right)^{2\alpha} \  \le \  \alpha\Big( m(L)|\Lambda(\ell)| \Big)^2\  \le \  \alpha \Big(  \E \big( D_\ell \big) \Big)^2
\end{equation*}
for $\ell$ sufficiently large (as a function of $\alpha$). This allows to rewrite \eqref{eq:var bound with L} in the simpler form stated in the proposition:
\begin{equation*}
  \Var\left( D_\ell \right)\le 241\cdot \alpha\cdot
  \Big(\E \big( D_\ell \big) \Big)^2.\qedhere
\end{equation*}
\end{proof}

\bigskip

\subsection{Putting it all together: $T=0$ for the nearest-neighbor case} \label{sec:nailing_T_0}\mbox{} \\[-1ex]

We now have all the tools for proving the assertion made in   Theorem~\ref{thm:power_law_bound} for zero temperature.

\begin{proof}[Proof of Theorem~\ref{thm:power_law_bound} at $T=0$]
Recall from~\eqref{eq:gamma def} that
\begin{equation}\label{eq:delta_choice}
  \gamma = 2^{-10}\chi\left(\frac{50 J}{\eps}\right)\, .
\end{equation}
In view of \eqref{eq:p_L_def}, if Theorem~\ref{thm:power_law_bound} does not hold at zero temperature for $J$ and $\eps$ then
\begin{equation*}
  \limsup_{L\to\infty} L^{\gamma}\cdot m(L)=\infty\, ,
\end{equation*}
which implies that
\begin{equation}\label{eq:contradiction_assumption}
  \text{$m(M)\ge M^{-\gamma}$ for infinitely many $M$}\, .
\end{equation}
We assume, in order to obtain a contradiction, that \eqref{eq:contradiction_assumption} holds. Let $M\ge 64$, later chosen sufficiently large, be such that $m(M)\ge M^{-\gamma}$. Applying Proposition~\ref{prop:comp_decay} we see that there is an $8\le \sqrt{M}\le L\le M$ such that
\begin{equation}\label{eq:controlled decay}
  m(L)\le m(j)\le m(L)\left(\frac{L}{j}\right)^{2\gamma} \,,\quad 1\le j\le L\, .
\end{equation}

We consider the three domains,  $\Lambda(k\ell)$ with $\ell = \lfloor L/4\rfloor$ and $k=1,2,3$.
Applying Proposition~\ref{prop:quantitative_lower_bound}
we obtain the anti-concentration inequality
\begin{equation*}
  \P\left(\frac{D_\ell}{\E \big( D_\ell \big)} < \frac{1}{2}\right)\,\ge\, \chi\left(\frac{4J}{\eps}\cdot\frac{|\extB\Lambda(2\ell)|}{\sqrt{|\Lambda(\ell)|}}\cdot \frac{m(\ell - 1)}{m(4\ell)} \right).
\end{equation*}

The right-hand side may be simplified, using \eqref{eq:controlled decay} together with the fact that $\gamma<2^{-10}$, and noting that the assumption $L\ge 8$ implies that $|\extB\Lambda(2\ell)|=4(2\ell+1)\le 3L$ and $|\Lambda(\ell)|=1+2\ell(\ell+1)\ge \frac{L^2}{16}$. This yields
\begin{equation}\label{eq:lower bound on fluctuations}
   \P\left(\frac{D_\ell}{\E \big( D_\ell \big)} < \frac{1}{2}\right) \,
   \ge\, \chi\left(\frac{4J}{\eps}\cdot \frac{3L}{L/4}\cdot\left(\frac{L}{\ell -1}\right)^{2\gamma}\right)\ge \chi\left(\frac{50J}{\eps}\right)\, .
\end{equation}

We shall now reach a contradiction by applying Proposition~\ref{prop:var_bound} with $\ell = \lfloor L/4\rfloor$, noting that the assumptions of that proposition are verified by \eqref{eq:controlled decay} and the fact that $\sqrt{M}\le L\le M$ and $m(M)\ge M^{-\alpha}$. The proposition implies that for $L$ sufficiently large (obtained by choosing $M$ sufficiently large), we have the concentration bound,
\begin{equation*}
 \Var\left( D_\ell \right)\le 241\cdot \gamma\cdot
  \Big(\E \big( D_\ell \big) \Big)^2\, .
\end{equation*}
Chebyshev's inequality then shows that
\begin{equation*}
    \P\left(\frac{D_\ell}{\E \big( D_\ell \big)} < \frac{1}{2}\right) \,
   \le \, 1000\gamma\, .
\end{equation*}
As this contradicts \eqref{eq:lower bound on fluctuations} for the choice \eqref{eq:delta_choice} of $\gamma$, we conclude that our initial assumption \eqref{eq:contradiction_assumption} must be false, implying that Theorem~\ref{thm:power_law_bound} holds at zero temperature.
\end{proof}

\bigskip

\section{Extension of the power-law upper bound to $T>0$}

\label{sec:positive_temperature}
In this section we adapt the zero-temperature proof of Theorem~\ref{thm:power_law_bound} to the positive temperature case. Again, for simplicity, we focus first on the case of nearest-neighbor interaction with the extension to finite-range interactions to follow in Section\,\ref{sec:rangeR}.

\subsection{Adjustments in the terminology} \mbox{} \\[-1ex]

At positive temperature the relevant function of the random field and of the boundary conditions is not the single ground-state configuration but the corresponding Gibbs probability measure. We proceed to explain how the proof is modified to account for this difference.

{\bf Influence/disagreement percolation.}
The order parameter, which at $T=0$ was the disagreement percolation  of \eqref{eq:p_L_def}
\begin{equation}
m(j; 0, \mathcal  J, h, \epsilon)  \=  \P\left(\sigma^{\Lambda(j),+}_{\mathbf{0}}> \sigma^{\Lambda(j),-}_{\mathbf{0}}\right)  \,
  \end{equation}
is replaced by the difference in the expected magnetization
\be
m(j; T, \mathcal  J, h, \epsilon)  \= \frac 1 2 \left[
    \E[\langle\sigma_\0\rangle^{\Lambda(j), +} ] \ - \      \E[\langle\sigma_\0\rangle^{\Lambda(j), -}]
    \right] \,.
\ee

Let us comment in passing that  the available monotone  coupling of the $+$ and $-$ probability measures allows to present also the last expression as the probability of disagreement percolation. However, to keep the discussion simple, we shall not stress this point.  \\

Correspondingly, as a measure of the disagreement in $\Lambda(\ell)$ due to the difference in boundary conditions placed on $\Lambda(3\ell)$ we take
\be \label{eq:D_l_def_pos_temp}
D_\ell(\eta) \ := \   \frac{1}{2}\sum_{v\in \Lambda(\ell)} \left[\langle\sigma_v\rangle^{\Lambda(3\ell), +}-\langle\sigma_v\rangle^{\Lambda(3\ell), -}\right]\, .
\ee
Recall also that, at $T=0$, $B_\ell(\eta)/J$ counted the number of edges in the separating surface $\partial_{\text{e}}\Lambda(2\ell)$ which contribute to the surface tension. At $T>0$, we find it more convenient to count vertices rather than edges, leading to the definition
\be \label{eq:B_l_def_pos_temp}
  \tilde{B}_{\ell}(\eta)   \ := \frac{J}{2}\sum_{v\in\extB\Lambda(2\ell)} \left[\langle\sigma_v\rangle^{\Lambda(3\ell)\setminus\Lambda(\ell),+} - \langle\sigma_v\rangle^{\Lambda(3\ell)\setminus\Lambda(\ell),-}\right].
\ee

{\bf Surface tension.}
For $T>0$, the role which is played by energy in the zero-temperature analysis is taken by the free energy, which for different combinations of the boundary conditions is defined as:
\begin{equation}
  \mathcal{F}^{s,s'}_\ell  := -T\cdot\log(Z^{\Lambda(3\ell)\backslash \Lambda(\ell); s,s'})
\end{equation}
where  $s$ and $s'$ indicate the $(\pm)$ boundary conditions placed on the external boundary  of $\Lambda(3\ell)$ and the internal boundary of $\Lambda(\ell)$, respectively, and the partition function is
 \begin{equation}
    Z^{\Lambda(3\ell)\backslash \Lambda(\ell); s,s'}=Z^{s,s'}=\sum_{\sigma:\Lambda(3\ell)\backslash \Lambda(\ell)\to\{-1,1\}} \exp\left(-\frac 1T H^{\Lambda(3\ell)\backslash \Lambda(\ell); s,s'}(\sigma)\right)
  \end{equation}
with $H^{\Lambda(3\ell)\backslash \Lambda(\ell); s,s'} = H^{s,s'}$ the Hamiltonian incorporating the boundary conditions.

Following this prescription,  the extension of  the surface tension, of \eqref{T_def},  to positive temperatures is
\be
  \mathcal{T}_{\ell}(\eta)  \=   T  \, \log\left(
    \frac{Z^{+,+} \cdot Z^{-,-} } {Z^{+,-} \cdot Z^{-,+} }\right)\,.
\ee
A similar replacement takes place in the definition of the function $G_{\ell}(\eta)$ in~\eqref{eq:G_def} and it is straightforward to check that the relation~\eqref{eq:TandG} still holds.\\

\subsection{Extension of the proof to $T>0$} \mbox{} \\[-1ex]

The zero-temperature bound of  Theorem~\ref{thm:T1}
is modified into the following statement, in which we replace the references to the ground-state spins by their quenched averages and where, for simplicity,
we have upper bounded a sum over $(u,v)\in\partial_{\text{e}}\Lambda(2\ell)$ (analogous to the one in Theorem~\ref{thm:T1})
by a sum over $v\in\extB\Lambda(2\ell)$.

\begin{thm}\label{thm:upper bound surface tension_pos_temp}
In the RFIM with nearest-neighbor interaction, for any realization of the field $\eta$,
\begin{equation} \label{Tpos_surface_tension}
\mathcal{T}_{\ell}(\eta)\le 8 \tilde{B}_\ell(\eta)\,.
\end{equation}
\end{thm}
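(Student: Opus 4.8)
The plan is to rerun the proof of Theorem~\ref{thm:T1} with free energies in place of ground-state energies, the decoupling of the two boundary components being achieved through a conditioning along the separating surface rather than a rigid restriction of spins. Write $D:=\Lambda(3\ell)\setminus\Lambda(\ell)$, put $\mathcal{F}^{s,s'}_\ell:=-T\log Z^{D;s,s'}$, so that $\mathcal{T}_\ell(\eta)=-\mathcal{F}^{+,+}_\ell-\mathcal{F}^{-,-}_\ell+\mathcal{F}^{+,-}_\ell+\mathcal{F}^{-,+}_\ell$, and set $R:=\extB\Lambda(2\ell)$. Because nearest-neighbour edges join only vertices lying in consecutive distance-shells, $R$ carries no internal edge and is a cutset of $D$: removing it disconnects $D$ into $D_{\mathrm{in}}:=D\cap\Lambda(2\ell)$ and $D_{\mathrm{out}}:=D\setminus\Lambda(2\ell+1)$, with $\partial_{\text{e}}\Lambda(2\ell)$ the only edges joining $D_{\mathrm{in}}$ to $R$. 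Summing over the configuration $\rho$ of $\sigma$ on $R$ factorises each partition function, $Z^{D;s,s'}=\sum_\rho W(\rho)\,Z^{s'}_{\mathrm{in}}(\rho)\,Z^{s}_{\mathrm{out}}(\rho)$, and a short computation collapses the alternating product to
\[
\mathcal{T}_\ell(\eta)\;=\;T\log\frac{\big\langle \phi(\sigma_R)\big\rangle^{+,-}}{\big\langle \phi(\sigma_R)\big\rangle^{-,-}}\,,\qquad
\phi(\rho):=\frac{Z^{+}_{\mathrm{in}}(\rho)}{Z^{-}_{\mathrm{in}}(\rho)}=\Big\langle \exp\Big(\tfrac2T\sum_{\substack{u\in D_{\mathrm{in}},\ w\in\extB(\Z^2\setminus\Lambda(\ell))\\ u\sim w}} J_{u,w}\,\sigma_u\Big)\Big\rangle^{-,\rho}_{D_{\mathrm{in}}}\,,
\]
where $\langle-\rangle^{s,s'}$ is the Gibbs state on $D$ with the boundary values of \eqref{eq:tilde_tau_def} and $\langle-\rangle^{-,\rho}_{D_{\mathrm{in}}}$ the Gibbs state on $D_{\mathrm{in}}$ with inner boundary condition $-$ and boundary condition $\rho$ on $R$. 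The function $\phi$ is increasing in $\rho$, and since $\langle-\rangle^{+,-}$ stochastically dominates $\langle-\rangle^{-,-}$ on $R$ (FKG) this already re-proves $\mathcal{T}_\ell\ge 0$.

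For the upper bound, note that $\langle-\rangle^{+,-}$ and $\langle-\rangle^{-,-}$ restricted to $R$ differ only through the boundary condition on the distant ring $\extB\Lambda(3\ell)$, so by FKG both lie, in stochastic order, between $\langle-\rangle^{-,-}|_R$ and $\langle-\rangle^{+,+}|_R$; the monotone coupling of the latter two has expected disagreement count $\sum_{v\in R}\tfrac12\big(\langle\sigma_v\rangle^{+,+}-\langle\sigma_v\rangle^{-,-}\big)=\tilde B_\ell(\eta)/J$ on $R$. Coupling $\langle-\rangle^{+,-}|_R$ and $\langle-\rangle^{-,-}|_R$ monotonely and using the Lipschitz bound $\big|\log\phi(\rho)-\log\phi(\rho')\big|\le \tfrac{4}{T}\sum_{(u,w)\in\partial_{\text{e}}\Lambda(2\ell):\,\rho_w\ne\rho'_w}J_{u,w}$ (valid since $\langle\sigma_u\rangle^{+,\rho}_{D_{\mathrm{in}}}-\langle\sigma_u\rangle^{-,\rho}_{D_{\mathrm{in}}}\in[0,2]$) — after which the $\partial_{\text{e}}\Lambda(2\ell)$-edge sum is replaced by a sum over $v\in\extB\Lambda(2\ell)$, each such $v$ lying in at most two of those edges, so that the right-hand side is at most $\tfrac{8J}{T}$ times the number of $R$-disagreements — is meant to convert the disagreement count along $R$ into the asserted bound $\mathcal{T}_\ell\le 8\tilde B_\ell(\eta)$. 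The intermediate $\partial_{\text{e}}\Lambda(2\ell)$-edge sum is the one alluded to in the statement.

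The real work is precisely this last conversion, and I expect it to be the main obstacle: a \emph{pointwise} Lipschitz bound on $\log\phi$ only controls the ratio $\langle\phi\rangle^{+,-}/\langle\phi\rangle^{-,-}$ by an exponential moment of the (random) number of disagreements along $R$, whereas $\tilde B_\ell$ controls only its mean. Bridging this gap is the positive-temperature substitute for the zero-temperature fact that rigidly fixing the spins on the agreement set $A$ changes none of the four energies; it should follow by combining monotonicity of $\phi$ with the observation that the $R$-disagreements driven by the far boundary condition form an FKG event anti-correlated with $\{\phi\text{ large}\}$, or else by discarding the pointwise estimate in favour of the free-energy interpolation $\mathcal{T}_\ell=\int_{0}^{1}\sum_{(u,v)\in\partial_{\text{e}}\Lambda(2\ell)}J_{u,v}\big(\langle\sigma_u\sigma_v\rangle^{+,+}_\lambda+\langle\sigma_u\sigma_v\rangle^{-,-}_\lambda-\langle\sigma_u\sigma_v\rangle^{+,-}_\lambda-\langle\sigma_u\sigma_v\rangle^{-,+}_\lambda\big)\,d\lambda$ obtained by rescaling the couplings across $\partial_{\text{e}}\Lambda(2\ell)$ by $\lambda\in[0,1]$ (so the surface tension vanishes at $\lambda=0$), whose integrand, via the simultaneous monotone coupling of the four Gibbs states, is dominated by a constant times $\sum_{v\in\extB\Lambda(2\ell)}J\big(\langle\sigma_v\rangle^{+,+}_\lambda-\langle\sigma_v\rangle^{-,-}_\lambda\big)$, the remaining issue there being to dispose of the $\lambda$-dependence so as to compare with the $\lambda=1$ quantity $\tilde B_\ell$.
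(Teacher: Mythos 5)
Your derivation of the identity
\[
\mathcal{T}_\ell(\eta)\;=\;T\log\frac{\big\langle \phi(\sigma_R)\big\rangle^{+,-}}{\big\langle \phi(\sigma_R)\big\rangle^{-,-}}\,,\qquad \phi(\rho)=\frac{Z^{+}_{\mathrm{in}}(\rho)}{Z^{-}_{\mathrm{in}}(\rho)}\,,
\]
is correct, and you have correctly diagnosed the resulting obstacle: a pointwise Lipschitz bound on $\log\phi$ combined with a monotone coupling of $\langle-\rangle^{+,-}|_R$ and $\langle-\rangle^{-,-}|_R$ controls this log-ratio only by an exponential moment of the disagreement count on $R$, whereas $\tilde B_\ell$ controls its first moment. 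Neither of the two escape routes you sketch is developed to the point of closing this gap, and I do not see that either would: the claimed FKG anti-correlation between $\{\phi\text{ large}\}$ and the $R$-disagreements is asserted, not argued; and in the $\lambda$-interpolation you would still need to relate $\langle\sigma_v\rangle^{+,+}_\lambda-\langle\sigma_v\rangle^{-,-}_\lambda$ back to the $\lambda=1$ quantity, which is not obviously monotone in $\lambda$.

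The paper resolves exactly this point by applying Jensen's inequality \emph{before} forming the alternating combination, so that the log never sits outside a ratio of conditioned averages. Let $\rho_+$, $\rho_-$ be the marginals on $R=\extB\Lambda(2\ell)$ of the $(+,+)$ and $(-,-)$ Gibbs states. Since
\[
\frac{Z^{+,-}}{Z^{+,+}}=\sum_{\tau}\rho_+(\tau)\,\frac{Z^{+,-}_\tau}{Z^{+,+}_\tau},
\]
concavity of $\log$ gives
\[
T\log\frac{Z^{+,+}}{Z^{+,-}}\;\le\;T\sum_{\tau}\rho_+(\tau)\log\frac{Z^{+,+}_\tau}{Z^{+,-}_\tau}\,,
\]
and likewise $T\log(Z^{-,-}/Z^{-,+})\le T\sum_\tau\rho_-(\tau)\log(Z^{-,-}_\tau/Z^{-,+}_\tau)$. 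Adding the two, inserting a monotone coupling $\rho(\tau^+,\tau^-)$ of $\rho_\pm$, and invoking the cross-ratio identity $Z^{+,+}_\tau Z^{-,-}_\tau = Z^{+,-}_\tau Z^{-,+}_\tau$ (so that pairs with $\tau^+=\tau^-$ contribute nothing), one is left with an expectation, over the coupling, of the log-ratio $\log\bigl(Z^{+,+}_{\tau^+}Z^{-,-}_{\tau^-}/(Z^{+,-}_{\tau^+}Z^{-,+}_{\tau^-})\bigr)$. This is bounded \emph{pointwise} by a configuration-wise energy comparison (your bijection~\eqref{eq:bijection}) of size $\tfrac{4J}{T}\sum_{v\in\extB\Lambda(2\ell)}(\tau^+_v-\tau^-_v)$, and taking the coupling average now produces exactly the first moment $8\tilde B_\ell$. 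In short, Jensen converts a ratio of expectations into an expectation of log-ratios; the log ends up inside the coupling average, which is precisely the positive-temperature substitute for the rigid restriction on the agreement set that you were searching for. Your edge-to-vertex conversion (each $v\in\extB\Lambda(2\ell)$ incident to at most two edges of $\partial_{\text{e}}\Lambda(2\ell)$) is correct and is exactly what the paper uses in the final step; it is only the preceding Jensen step that is missing from your argument.
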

\begin{proof}
As in the $T=0$ case, the set  $\extB\Lambda(2\ell)$ enters the discussion as a  separating barrier between
the inner and the outer boundary of  $\Lambda(3\ell) \backslash \Lambda(\ell)$.  Denoting   the restriction of the spin configuration to this set by  $\tau:\extB\Lambda(2\ell)\to\{-1,1\}$, let $\rho_+$ and, correspondingly, $\rho_-$ be the two probability measures induced on it by the $(+,+)$ and $(-,-)$ boundary conditions.  More explicitly,
  \be
  \rho_+(\tau) \= \frac{Z^{+,+}_\tau}{Z^{+,+}}\,, \, \qquad   \rho_-(\tau) \= \frac{Z^{-,-}_\tau}{Z^{-,-}}\, ,
  \ee
with $Z^{s,s'}_\tau$ the restricted partition functions
  \begin{equation*}
    Z^{s,s'}_\tau:=\sum_{\substack{\sigma:\Lambda(3\ell)\backslash \Lambda(\ell)\to\{-1,1\}\\
    \sigma|_{\extB\Lambda(2\ell)} = \tau}} \exp\left(-\frac 1T H^{s,s'}(\sigma)\right).
  \end{equation*}

Considering first the $(+)$ case, let us note that
  \be
 \sum_{\tau:\extB\Lambda(2\ell)\to\{-1,1\}} \rho_+(\tau) \,  \frac{Z^{+,-}_\tau}{Z^{+,+}_\tau} \=  \frac{Z^{+,-}}{Z^{+,+}}
  \ee
 Hence, by Jensen's inequality (and the convexity of $-\log(X)$),
 for each specified $\eta$ (which is omitted in the following expression)
 \be
   \log\left(\frac{Z^{+,+}}{Z^{+,-}}\right) \
 \leq  \  \sum_{\tau:\extB\Lambda(2\ell)\to\{-1,1\}}
   \, \rho_+(\tau)  \, \log\left(\frac{Z^{+,+}_\tau}{Z^{+,-}_\tau}\right) \
   \ee

  Combining the above with the analogous statement for $\rho_-(\tau) $ we get:
     \begin{eqnarray}\label{eq:tau_eta_first_upper_bound}
    \mathcal{T}_{\ell}(\eta) & = &   T  \log\left(
    \frac{Z^{+,+}}{Z^{+,-}}\cdot \frac{Z^{-,-}}{Z^{-,+}}\right)  \ \leq  \  \\
    &  \leq &  T\left[\sum_{\tau:\extB\Lambda(2\ell)\to\{-1,1\}}
   \, \rho_+(\tau)
    \, \log\left(\frac{Z^{+,+}_\tau}{Z^{+,-}_\tau}\right) \  +  \
    \sum_{\tau:\extB\Lambda(2\ell)\to\{-1,1\}}
   \, \rho_-(\tau) \log\left(\frac{Z^{-,-}_\tau}{Z^{-,+}_\tau}\right)\right]\,.
   \notag
  \end{eqnarray}

  We now use the fact that the measure $\P^{+,+}$ stochastically dominates $\P^{-,-}$, as in \eqref{eq:stochastic_domination_pos_temp}. In particular, there exists a probability measure $\rho(\tau^+,\tau^-)$ on pairs $\tau^+,\tau^-:\extB\Lambda(2\ell)\to\{-1,1\}$ such that $\tau^+\ge\tau^-$ pointwise, with probability $1$, and the marginal distribution of each $\tau^s$ is given by $\rho_s$.  This \emph{coupling}  of measures allows to express  \eqref{eq:tau_eta_first_upper_bound} in the form
  \begin{equation}\label{eq:tau_eta_second_upper_bound}
    \mathcal{T}_\ell(\eta)\le T\left[\sum_{\tau^+,\tau^-:\extB\Lambda(2\ell)\to\{-1,1\}}
    \rho(\tau^+,\tau^-)
    \log\left(\frac{Z^{+,+}_{\tau^+}}{Z^{+,-}_{\tau^+}}\cdot \frac{Z^{-,-}_{\tau^-}}{Z^{-,+}_{\tau^-}}\right)\right].
  \end{equation}

The coupling of the measures allows to bound the quantity on the right in terms of the positive temperature version of the disagreement percolation.
The estimate is motivated by the observation that for every configuration $\tau$:
  \be \label{cross_ratio_at_tau}
Z^{+,+}_{\tau}\cdot Z^{-,-}_{\tau}\= Z^{+,-}_{\tau} \cdot Z^{-,+}_{\tau}\,.
   \ee
The proof is through the  bijection associating  to each pair $(\sigma^{+,+},\sigma^{-,-})$ contributing to the double sum on the left   the following pair $(\sigma^{+,-},\sigma^{-,+})$ contributing to the double sum on the right:
  \begin{equation}\label{eq:bijection}
    {\sigma}^{+,-}_v:=\begin{cases}
     \sigma^{+,+}_v&v\in\Lambda(3\ell)\setminus\Lambda(2\ell)\\
     \sigma^{-,-}_v&v\in\Lambda(2\ell)\setminus\Lambda(\ell)
    \end{cases},\quad
   {\sigma}^{-,+}_v:=\begin{cases}
     \sigma^{-,-}_v&v\in\Lambda(3\ell)\setminus\Lambda(2\ell)\\
     \sigma^{+,+}_v&v\in\Lambda(2\ell)\setminus\Lambda(\ell)
    \end{cases}.
  \end{equation}
At the common value of the configuration $\tau$ over the separating set $\extB \Lambda(2\ell)$,
    the  sums of the corresponding energy terms in \eqref{cross_ratio_at_tau} match.

    Thus terms with $\tau^+=\tau^-$ make no contribution to the sum
    \eqref{eq:tau_eta_second_upper_bound}.  For the  more general case
we note that when the restriction of $\sigma^{+,+}$ ($\sigma^{-,-}$) to $\extB\Lambda(2\ell)$ is $\tau^+$ ($\tau^-$) and $\sigma^{+,-}, \sigma^{-,+}$ are given by \eqref{eq:bijection} then, with $\tau^+\ge\tau^-$,
  \begin{equation}\label{eq:energy_difference_bound}
  \begin{split}
    &-\frac{1}{T}\left(H^{+,+}(\sigma^{+,+}) + H^{-,-}(\sigma^{-,-}) - H^{+,-}(\sigma^{+,-}) - H^{-,+}(\sigma^{-,+})\right)\\
    &=\frac{J}{T}\sum_{(u,v)\in\partial_{\text{e}}\Lambda(2\ell)} \left(\sigma^{+,+}_u\sigma^{+,+}_v + \sigma^{-,-}_u\sigma^{-,-}_v - \sigma^{+,+}_u\sigma^{-,-}_v - \sigma^{-,-}_u\sigma^{+,+}_v\right)\\
    &=\frac{J}{T}\sum_{(u,v)\in\partial_{\text{e}}\Lambda(2\ell)} \left(\sigma^{+,+}_u - \sigma^{-,-}_u\right)\cdot\left(\sigma^{+,+}_v - \sigma^{-,-}_v\right)\\
    &=\frac{J}{T}\sum_{(u,v)\in\partial_{\text{e}}\Lambda(2\ell)} \left(\sigma^{+,+}_u - \sigma^{-,-}_u\right)\cdot\left(\tau^+_v - \tau^-_v\right)\le \frac{4J}{T}\sum_{v\in\extB\Lambda(2\ell)} \left(\tau^+_v - \tau^-_v\right),
  \end{split}
  \end{equation}
where the third equality uses the fact that if $(u,v)\in\partial_{\text{e}}\Lambda(2\ell)$ then $v\in\extB\Lambda(2\ell)$ and the inequality uses the fact that each vertex $v\in\extB\Lambda(2\ell)$ is incident to at most two edges $(u,v)\in\partial_{\text{e}}\Lambda(2\ell)$ and the fact that $\tau^+\ge \tau^-$ pointwise.
Thus   \begin{equation*}
    \frac{Z^{+,+}_{\tau^+}}{Z^{+,-}_{\tau^+}}\cdot \frac{Z^{-,-}_{\tau^-}}{Z^{-,+}_{\tau^-}}\le \exp\left(\frac{4J}{T}\sum_{v\in\extB\Lambda(2\ell)} \left(\tau^+_v - \tau^-_v\right)\right).
  \end{equation*}
  Finally, inserting this estimate in
   \eqref{eq:tau_eta_second_upper_bound} we get
  \begin{equation*}
    \mathcal{T}_\ell(\eta)\le 4J\sum_{\tau^+,\tau^-:\extB\Lambda(2\ell)\to\{-1,1\}}\rho(\tau^+,\tau^-)\sum_{v\in\extB\Lambda(2\ell)} \left(\tau^+_v - \tau^-_v\right).
  \end{equation*}
Through the definition of $\rho(\tau^+,\tau^-)$ the above reduces to the bound asserted in \eqref{Tpos_surface_tension}.
\end{proof}

The representation of the surface tension given by Theorem~\ref{thm:T2}, which enables a lower bound on its expected value at zero temperature, continues to hold at positive temperature with the exact same statement. The proof also remains the same, upon replacing~\eqref{eq:G_derivative_formula} and~\eqref{eq:surface_tension_via_derivative} with the analogous
\begin{equation}
  \frac{\partial}{\partial\eta_v} G_\ell(\eta) = \eps
  \left[ \langle\sigma_v\rangle^{\Lambda(3\ell), +}-\langle\sigma_v\rangle^{\Lambda(3\ell), -} \right]
\end{equation}
and
\begin{equation}
\mathcal T_\ell(\eta) = \eps\int_{-\infty}^\infty
\sum_{v \in \Lambda(\ell)} \left[ \langle\sigma_v\rangle^{\Lambda(3\ell), +}(\eta^{(t)})-\langle\sigma_v\rangle^{\Lambda(3\ell), -}(\eta^{(t)}) \right] \, dt\, = \, 2\eps
\int_{-\infty}^\infty D_{\ell}(\eta^{(t)})\, dt\, .
\end{equation}

Combining Theorem~\ref{thm:upper bound surface tension_pos_temp} and~\eqref{DD2} we obtain
\begin{eqnarray}
 \frac{4 \,  \E(\tilde{B}_\ell(\eta))}{\eps \sqrt{|\Lambda(\ell)|}}
 & \geq &   \E \left(  \frac{D_{\ell}(\eta) }{| \Lambda(\ell)|}\,
 \frac{1 }{ \phi(\widehat \eta) }  \right) \,
\end{eqnarray}
which replaces \eqref{101} when $T>0$. The bound implies that Proposition~\ref{prop:quantitative_lower_bound} continues to hold at positive temperature, with the exact same statement and with $2\tilde{B}_\ell$ replacing $B_\ell$ throughout the proof (noting, in particular, that
\begin{equation}
  2\E(\tilde{B}_\ell) \le 2\, J\, |\extB\Lambda(2\ell)|\,m(\ell-1)
\end{equation}
holds instead of \eqref{eq:B_ell_bound}).

The upper bound on the variance of $D_\ell$, given for $T=0$ by Proposition~\ref{prop:var_bound}, continues to hold exactly as stated also when $T>0$. In the proof, the indicator random variable of the event $E_v$ is replaced with the random variable
\begin{equation}
  X_v := \frac{1}{2}\left[\langle\sigma_v\rangle^{\Lambda(3\ell), +}-\langle\sigma_v\rangle^{\Lambda(3\ell), -}\right]\,.
\end{equation}
This yields, e.g., the analogous equation to~\eqref{eq:var_as_sum_over_pairs},
\begin{equation}
  \Var\left(D_\ell \right) = \sum_{v,w\in \Lambda(\ell)}  \big[\E(X_v\cdot X_w) - \E(X_v)\E(X_w)\big]\,
\end{equation}
and the analogous equation to~\eqref{eq:uv},
\be
  \E(X_v\cdot X_w) \< m(r(v,w))^2
\end{equation}
with $r(v,w)$ defined in \eqref{eq:r_v_w_def}. The last inequality holds as, via the monotonicity property~\eqref{eq:FKG_consequence_pos_temp},
\begin{equation*}
  \E(X_v\cdot X_w) \le \frac{1}{4}\E\left[\left(\langle\sigma_v\rangle^{\Lambda_v(r(v,w)), +}-\langle\sigma_v\rangle^{\Lambda_v(r(v,w)), -}\right)\left(\langle\sigma_v\rangle^{\Lambda_w(r(v,w)), +}-\langle\sigma_v\rangle^{\Lambda_w(r(v,w)), -}\right)\right]\,
\end{equation*}
after which one may rely on independence.

The end of the proof of Theorem~\ref{thm:power_law_bound}, detailed in Section~\ref{sec:nailing_T_0} for the zero-temperature case, applies without change to prove the theorem at positive temperature.

\bigskip

\section{Extension to   finite-range interactions}
\label{sec:rangeR}

At $T=0$, the proof for general finite-range interactions $\mathcal J$ remains the same with the following minor changes, in which $C_k(\mathcal{J})$ denote positive constants depending only on $\mathcal J$ and $R(\mathcal J) = \max\{d(u,v) \, : \, J_{u,v} \neq 0\}$ (the interaction's range).
\begin{enumerate}[1)]
  \item The statement of Theorem~\ref{thm:T1} is changed by replacing the bound $B_{\ell} (\eta)  \leq   2 J \, |\extB \Lambda(2 \ell)|$ by
  \begin{equation}
    B_{\ell} (\eta) \  \leq \  \sum_{(u,v)\in \partial_{\text{e}} \Lambda(2\ell)} J_{u,v}\,.
  \end{equation}
  \item The condition $|h+\eps \eta_v| >  4J$ appearing in the proof of Theorem~\ref{thm:T2} is replaced by $|h+\eps \eta_v| >  \sum_v J_{\zero, v}$.
  \item\label{item:prop_change} The bound~\eqref{eq:B_ell_bound} is replaced by
  \begin{equation*}
    \E(B_\ell) \le C_1(\mathcal{J})|\extB\Lambda(2\ell)|\,m(\ell-R(\mathcal J)).
  \end{equation*}
  Consequently in~\eqref{202}, $4 J |\extB\Lambda(2\ell)|\,m(\ell-1)$ becomes $C_2(\mathcal{J}) |\extB\Lambda(2\ell)|\,m(\ell-R(\mathcal J))$.
  \item In the proof of Proposition~\ref{prop:var_bound}, the definition of $r(v,w)$ in \eqref{eq:r_v_w_def} is replaced by
  \begin{equation*}
    r(v,w):=\lfloor (d(v,w)-R(\mathcal J))/2\rfloor.
  \end{equation*}
  The simple bound~\eqref{eq:E_v_E_w_simple_bound} is then used for pairs $v,w$ at distance $d(v,w)\le R(\mathcal J) + 1$, leading to the factor $|\Lambda(2)|$ appearing in the proof being replaced by $|\Lambda(R(\mathcal J) + 1)|$.

  The statement of Proposition~\ref{prop:var_bound} is changed to allow $L_0$ to depend on $\mathcal J$ (besides $\alpha$).
  \item The proof of Theorem~\ref{thm:power_law_bound} given in Section~\ref{sec:nailing_T_0} is modified by taking into account the change described in item~\ref{item:prop_change} above in the constants appearing in Proposition~\ref{prop:quantitative_lower_bound}. Correspondingly, inequality~\eqref{eq:lower bound on fluctuations} is modified to
      \begin{equation}
        \P\left(\frac{D_\ell}{\E \big( D_\ell \big)} < \frac{1}{2}\right) \,
        \ge\, \chi\left(\frac{C_3(\mathcal J)}{\eps}\cdot \left(\frac{L}{\ell - R(\mathcal J)}\right)^{2\gamma}\right)\ge \chi\left(\frac{C_4(\mathcal J)}{\eps}\right)
      \end{equation}
      holding for $L$ sufficiently large, and the power $\gamma$ appearing in the theorem is modified from its value in~\eqref{eq:delta_choice} to
      \begin{equation}
        \gamma = 2^{-10}\chi\left(\frac{C_4(\mathcal J)}{\eps}\right).
      \end{equation}

  \medskip
  At $T>0$, the argument extends to general finite-range interactions by applying the following changes:
  \begin{enumerate}[1)]
  \item The definition of $\tilde{B}_{\ell}$ in \eqref{eq:B_l_def_pos_temp} is modified to
  \be
  \tilde{B}_{\ell}(\eta)\ := \frac{1}{4}\sum_{v\in\extB\Lambda(2\ell)} J_v \left[\langle\sigma_v\rangle^{\Lambda(3\ell)\setminus\Lambda(\ell),+} - \langle\sigma_v\rangle^{\Lambda(3\ell)\setminus\Lambda(\ell),-}\right]
   \ee
   with
   \be
    J_v\ :=\ \sum_{u\colon (u,v)\in\partial_{\text{e}}\Lambda(2\ell)} J_{u,v}\,.
   \ee

  \item The proof of Theorem~\ref{thm:upper bound surface tension_pos_temp} is modified by replacing the inequality~\eqref{eq:energy_difference_bound} with
  \begin{equation}
  \begin{split}
    &-\frac{1}{T}\left(H^{+,+}(\sigma^{+,+}) + H^{-,-}(\sigma^{-,-}) - H^{+,-}(\sigma^{+,-}) - H^{-,+}(\sigma^{-,+})\right)\\
    &=\frac{1}{T}\sum_{(u,v)\in\partial_{\text{e}}\Lambda(2\ell)} J_{u,v}\left(\sigma^{+,+}_u\sigma^{+,+}_v + \sigma^{-,-}_u\sigma^{-,-}_v - \sigma^{+,+}_u\sigma^{-,-}_v - \sigma^{-,-}_u\sigma^{+,+}_v\right)\\
    &=\frac{1}{T}\sum_{(u,v)\in\partial_{\text{e}}\Lambda(2\ell)} J_{u,v}\left(\sigma^{+,+}_u - \sigma^{-,-}_u\right)\cdot\left(\sigma^{+,+}_v - \sigma^{-,-}_v\right)\\
    &=\frac{1}{T}\sum_{(u,v)\in\partial_{\text{e}}\Lambda(2\ell)} J_{u,v}\left(\sigma^{+,+}_u - \sigma^{-,-}_u\right)\cdot\left(\tau^+_v - \tau^-_v\right)\le \frac{2}{T}\sum_{v\in\extB\Lambda(2\ell)} J_v\left(\tau^+_v - \tau^-_v\right),
  \end{split}
  \end{equation}
  with this change propagating to the next two displayed equations in the proof.

  \item The changes analogous to those described for the $T=0$ case.
  \end{enumerate}
\end{enumerate}

\bigskip

 \section{Magnetization decoupling bounds}   \label{app:corr}
For completeness sake we enclose here proofs
that
the influence percolation probability $m(\ell,...)$ provides  bounds on both
the covariance between the quenched  local magnetizations at distant sites
and
the spin - spin covariance within the Gibbs states at typical configurations of the random field, as was asserted in  \eqref{Es1}  and \eqref{eq:mcorr}. The arguments apply in the generality of the random-field Ising model on a general infinite transitive graph, in any of its infinite-volume Gibbs states.

\begin{lemma}\label{lem:cov} In the random field Ising model on a transitive graph, with spin-spin coupling of a finite range $R(\mathcal J)$ and any pair of vertices $\{u,v\}$ at distance $d(u,v)$. If $ d(u,v) > \ell$ then
\be \label{Es1_app}
 \E{(\langle \sigma_u;\sigma_v\rangle )} \  \leq   \   2\,  m(\ell ;T,  \mathcal J, h, \epsilon),
\ee
while if $ d(u,v) \geq 2\ell+R(\mathcal J)$ then
\be
  \label{eq:mcorr_app}
\rm{Cov}\Big( \langle \sigma_u \rangle; \langle \sigma_v\rangle) \Big) \ := \
   \E(\langle \sigma_u \rangle; \langle \sigma_v\rangle) \ \leq \  4 \,   m(\ell ;T,  \mathcal J, h, \epsilon) \,.
\ee
\end{lemma}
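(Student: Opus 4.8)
The plan is to prove the two bounds \eqref{Es1_app} and \eqref{eq:mcorr_app} separately, in each case by a conditioning/coupling argument that reduces the covariance to a disagreement-percolation probability controlled by $m(\ell)$. The common mechanism: if a spin (or local magnetization) at a site is ``insensitive'' to the configuration of the random field and boundary spins outside a ball of radius $\ell$ around it, then two such spins at distance exceeding roughly $2\ell$ decouple, and the probability of the complementary (sensitive) event is bounded by $m(\ell)$.

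For the first bound \eqref{Es1_app}, fix the random field $\eta$ and work inside a given infinite-volume Gibbs state $\langle-\rangle$. The quantity $\langle\sigma_u;\sigma_v\rangle$ is a thermal truncated correlation, so I would use the standard coupling of the Gibbs measure with a modified measure in which the spins on $\partial_{\text v}\Lambda_u(\ell)$ (or on the edge boundary of $\Lambda_u(\ell)$) are fixed. Concretely, by monotonicity (FKG, \eqref{eq:FKG_pos_temp}--\eqref{eq:FKG_consequence_pos_temp}) one has $\langle\sigma_v\rangle^{\Lambda_u(\ell),-}\le\langle\sigma_v\,|\,\text{any b.c. on }\partial\Lambda_u(\ell)\rangle\le\langle\sigma_v\rangle^{\Lambda_u(\ell),+}$ when $d(u,v)>\ell$, so actually it is cleaner to bound $\langle\sigma_u;\sigma_v\rangle$ by the sensitivity of $\langle\sigma_u\rangle$ to the boundary condition at distance $\ell$: write $\langle\sigma_u\sigma_v\rangle - \langle\sigma_u\rangle\langle\sigma_v\rangle$ by conditioning on the spins outside $\Lambda_u(\ell)$ and using that, once those are fixed, $\sigma_u$ becomes (conditionally) independent of $\sigma_v$; the error is governed by $\langle\sigma_u\rangle^{\Lambda_u(\ell),+}-\langle\sigma_u\rangle^{\Lambda_u(\ell),-}$ whose $\eta$-average is $2m(\ell)$ after averaging. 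The factor $2$ in \eqref{Es1_app} is exactly the $\frac12[\cdot_+ - \cdot_-]$ normalization of $m$ undone. I would write this using the Edwards–Sokal/monotone coupling or simply the inequality $|\langle\sigma_u;\sigma_v\rangle|\le\langle\sigma_u\rangle^{\Lambda_u(\ell),+}-\langle\sigma_u\rangle^{\Lambda_u(\ell),-}$ valid for $d(u,v)>\ell$, then take $\E$.

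For the second bound \eqref{eq:mcorr_app}, the covariance is over the random field, so the relevant independence is that of $\eta$ on disjoint sets. Using \eqref{eq:truncated_correlations}, write $\mathrm{Cov}(\langle\sigma_u\rangle;\langle\sigma_v\rangle)=\E\big([\langle\sigma_u\rangle-\E\langle\sigma_0\rangle][\langle\sigma_v\rangle-\E\langle\sigma_0\rangle]\big)$. The idea is to approximate $\langle\sigma_u\rangle$ by a random variable $\langle\sigma_u\rangle^{\Lambda_u(\ell),\tau}$ measurable with respect to $\eta$ restricted to $\Lambda_u(\ell)$ together with a boundary condition $\tau$; by FKG monotonicity in $\eta$ and the Harris inequality the error $\E\big(\langle\sigma_u\rangle^{\Lambda_u(\ell),+}-\langle\sigma_u\rangle^{\Lambda_u(\ell),-}\big)=2m(\ell)$. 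Replacing $\langle\sigma_u\rangle$ by its finite-volume approximant in $\Lambda_u(\ell)$ and $\langle\sigma_v\rangle$ by the one in $\Lambda_v(\ell)$, which are based on disjoint random fields once $d(u,v)\ge 2\ell+R(\mathcal J)$ (the range $R$ enters because the Hamiltonians \eqref{eq:Hamiltonian_def} couple a ball to its $R$-neighborhood), kills the covariance. Tracking the $2m(\ell)$ error on each of $u$ and $v$, and the cross term, gives the constant $4$. The main obstacle — and the only place that needs care — is handling the two approximation errors simultaneously while keeping the constant at $4$: one must bound $\mathrm{Cov}(\langle\sigma_u\rangle;\langle\sigma_v\rangle)\le \mathrm{Cov}(X_u;X_v) + (\text{errors})$ where the errors come from replacing $\langle\sigma_u\rangle$ by $X_u$, and use that $X_u,X_v$ are independent so $\mathrm{Cov}(X_u;X_v)=0$, being careful that $X_u,X_v\in[-1,1]$ so each substitution costs at most $\E|\langle\sigma_u\rangle - X_u|\le 2m(\ell)$ and the cross term does not double-count. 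Since both spins can be sandwiched between the $+$ and $-$ finite-volume states and these are FKG-monotone, the substitution errors add rather than multiply, yielding the clean bound $4\,m(\ell)$.
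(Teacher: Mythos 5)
Your outline for the first bound \eqref{Es1_app} matches the paper's argument: via the DLR/tower identity and the FKG sandwich one obtains the pointwise inequality $|\langle\sigma_u;\sigma_v\rangle| \le \langle\sigma_u\rangle^{\Lambda_u(\ell),+} - \langle\sigma_u\rangle^{\Lambda_u(\ell),-}$ whenever $d(u,v)>\ell$, and averaging over $\eta$ then yields $2\,m(\ell)$.

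For the second bound \eqref{eq:mcorr_app}, your strategy---replace $\langle\sigma_u\rangle$ and $\langle\sigma_v\rangle$ by approximants measurable with respect to $\eta$ on the disjoint balls $\Lambda_u(\ell)$, $\Lambda_v(\ell)$, then exploit independence---is the paper's, but as written it does \emph{not} produce the constant $4$. You take $X_u$ with $\E|\langle\sigma_u\rangle - X_u|\le 2m(\ell)$, i.e.\ one of the $\pm$ finite-volume approximants, whose $L^1$-distance to $\langle\sigma_u\rangle$ is only controlled by the full FKG gap. Feeding this into the covariance-perturbation identity
\begin{multline*}
  \mathrm{Cov}(A,B)-\mathrm{Cov}(\widetilde A,\widetilde B)
  =\E\big[(A-\widetilde A)B\big]+\E\big[\widetilde A(B-\widetilde B)\big] \\
  +\E\big[\widetilde A-A\big]\,\E[B]+\E\big[\widetilde A\big]\,\E\big[\widetilde B-B\big],
\end{multline*}
which, with $\|B\|_\infty,\|\widetilde A\|_\infty\le 1$, gives $|\mathrm{Cov}(A,B)|\le 2\|A-\widetilde A\|_1+2\|B-\widetilde B\|_1$, yields $8\,m(\ell)$, not $4\,m(\ell)$. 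Your appeal to FKG-monotonicity making the errors ``add rather than multiply'' does not close this gap: the factor $2$ multiplying each $L^1$-error is forced by the four terms of the decomposition and has nothing to do with multiplicativity. The specific device the paper uses, and which you are missing, is to take for $\widetilde A$ the \emph{midpoint} $\langle\sigma_u\rangle^{\Lambda_u(\ell),\mathrm{av}}:=\tfrac12\big[\langle\sigma_u\rangle^{\Lambda_u(\ell),+}+\langle\sigma_u\rangle^{\Lambda_u(\ell),-}\big]$ of the two approximants; since $\langle\sigma_u\rangle$ is FKG-sandwiched between them, this halves the error to $\|A-\widetilde A\|_1\le m(\ell)$, and only then does the decomposition deliver $4\,m(\ell)$. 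The midpoint remains measurable with respect to $\eta|_{\Lambda_u(\ell)}$, so the independence step survives unchanged.
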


\noindent{\bf Proof}:
{\it i)\/}  By the FKG monotonicity of the RFIM Gibbs states, the Gibbs conditional expectation of
$\sigma_u$, conditioned on the configuration's restriction to the complement of the set $\Lambda_u(\ell)$, satisfies, for any configuration of the random field
\be
\langle\sigma_u\rangle^{\Lambda_u(\ell), -}  \ \leq \
 \langle\sigma_u\rangle^{\Lambda_u(\ell), \sigma_{\Lambda_u(\ell)^c}}  \ \leq
 \
 \langle\sigma_u\rangle^{\Lambda_u(\ell), +} \,.
\ee
Averaging over $\sigma_{\Lambda_u(\ell)^c} $, one learns that also the infinite-volume expectation value is bracketed by
$\langle\sigma_u\rangle^{\Lambda_u(\ell), \pm} $:
\be \label{diff4}
\langle\sigma_u\rangle^{\Lambda_u(\ell), -}  \ \leq \
 \langle\sigma_u\rangle  \ \leq
 \
 \langle\sigma_u\rangle^{\Lambda_u(\ell), +} \,.
\ee
The two equations imply:
\be \label{diff5}
\Big |  \langle\sigma_u\rangle  \ - \   \langle\sigma_u\rangle^{\Lambda_u(\ell), \sigma_{\Lambda_u(\ell)^c}}  \Big | \ \leq \
 \Big[   \langle\sigma_u\rangle^{\Lambda_u(\ell), +} -
 \langle\sigma_u\rangle^{\Lambda_u(\ell), -}  \Big ]
\ee
The covariance of the spins within the infinite-volume Gibbs state can be written as
\ba \label{cov_expression}
 \langle \sigma_u;\sigma_v\rangle  & =
 \langle \big(  \sigma_u -   \langle\sigma_u\rangle \big) \, \, \sigma_v\rangle
 \\
 & =
  \langle \big( \langle\sigma_u\rangle^{\Lambda_u(\ell), \sigma_{\Lambda_u(\ell)^c}}   -   \langle\sigma_u\rangle\big) \, \, \sigma_v\rangle
\ea
where the second equation is by the state's Dobrushin-Lanford-Ruelle property and the assumption that $ d(u,v) > \ell$.

Combining \eqref{cov_expression}  with \eqref{diff5} we learn that for any realization of the random field
\be \Big| \langle \sigma_u;\sigma_v\rangle  \Big| \  \leq \
\Big[   \langle\sigma_u\rangle^{\Lambda_u(\ell), +} -
 \langle\sigma_u\rangle^{\Lambda_u(\ell), -}  \Big ].
\ee
Averaging this relation over the disorder one gets \eqref{Es1_app}. \\

ii)  For the second covariance bound let
\be
 \langle\sigma_u\rangle^{\Lambda_u(\ell), av}  \ := \
\frac 1 2\Big[   \langle\sigma_u\rangle^{\Lambda_u(\ell), +} +
 \langle\sigma_u\rangle^{\Lambda_u(\ell), -}  \Big]
\ee
and observe that since the random fields on which $\langle\sigma_u\rangle^{\Lambda_u(\ell), av}$ and $\langle\sigma_v\rangle^{\Lambda_v(\ell), av}$  depend belong to disjoint sets, their covariance vanishes:
\be
 \rm{Cov}\Big(\langle\sigma_u\rangle^{\Lambda_u(\ell), av},\langle\sigma_u\rangle^{\Lambda_u(\ell), av}\Big) \= 0
\ee
Furthermore, by \eqref{diff4},
\be  \label{diff6}
\Big |  \langle\sigma_u\rangle  \ - \   \langle\sigma_u\rangle^{\Lambda_u(\ell), av}   \Big | \ \leq \
\frac 1 2\Big[   \langle\sigma_u\rangle^{\Lambda_u(\ell), +} -
 \langle\sigma_u\rangle^{\Lambda_u(\ell), -}  \Big ] \,. \\[2ex]
\ee

The claimed  \eqref{eq:mcorr_app} then follows by a simple application of the general covariance bound:
\begin{multline}
\Big | \rm{Cov}(A,B) - \rm{Cov}(\widetilde A, \widetilde B)  \Big  | = \Big |\E[(A - \widetilde A)B]+\E[\widetilde A(B-\widetilde B)]+\E[(\widetilde A - A)]\E [B]+\E[\widetilde A]\E[(\widetilde B - B)]\Big  | \\
\leq 2 \,\|A-\widetilde A\|_1 \cdot \| B \|_{\infty} \ + \ 2 \, \|B-\widetilde B\|_1 \cdot \| \widetilde A \|_{\infty}
\end{multline}
applied to
\ba A &= \langle\sigma_u\rangle\,  ,  \qquad \widetilde A&=\langle\sigma_u\rangle^{\Lambda_u(\ell), av} \\
B&= \langle\sigma_v\rangle\,  ,  \qquad \widetilde B &=\langle\sigma_v\rangle^{\Lambda_v(\ell), av}
\ea
for which, by \eqref{diff6} and the definition of $m(\ell ;T,  \mathcal J, h, \epsilon) $,
\be
\|A-\widetilde A\|_1 = \|B-\widetilde B\|_1 \ \leq \   m(\ell ;T,  \mathcal J, h, \epsilon).
\ee
\qed

\medskip

\section{Discussion and open questions}  \label{sec:OP}

In summary: our study quantifies the analysis of~\cite{AW89,AW90} that for each value of the external field  the model's Hamiltonian almost surely has a unique infinite-volume ground state, and similarly unique positive-temperature Gibbs states.  The upper bounds proven here
establish that  the probability that the ground-state configuration depends on the quenched disorder at distance $\ell$ away  decays by at least an $\eps$-dependent  power, and exponentially fast if the disorder parameter is sufficiently large.  However, our understanding of the model remains incomplete.  Following is a selection of open questions, some with relevance for physics models and some as a challenge to probabilists of related interests.  \\

{\bf Exponential vs. power-law decay.}  As mentioned above, an open question of enduring interest is whether as the disorder parameter ($\eps/ J$) is tuned down the ground state's dependence on the quenched disorder  makes a transition from exponential decay to a power law. Tentative but admittedly weak arguments have appeared for each of these possibilities (\cite{GM82, BK88} and \cite{DS84}). Also of interest is the corresponding question for the $O(N)$ symmetric models in dimensions $d\leq 4$, the latter being the critical dimension for the Imry-Ma phenomenon in the presence of continuous symmetry.   \\

{\bf  Cluster dynamics.}  Consider the RFIM dynamics in which a large system with a quenched random field is subject to a slowly varying uniform magnetic field $h$.  For $|h|/\eps$ large enough, the ground state configuration is close to being constant, coinciding with   the sign of $h$.
As the uniform field is increased, starting from the sufficiently  negative value,  the corresponding ground state configuration changes in a sequence of flips, in which a cluster of $-$ spins  flips to $+$ spins.  Thus the graph is partitioned into connected clusters of sites for which at the given random field $\eta$ the spins flip at a common value of $h$.  It can be shown that in two dimensions almost surely each  flip involves only a finite number of sites, and the mean  value of the size of the cluster which flips along with a preselected site is finite throughout the regime in which the ground state spins decorrelate   exponentially fast.   Does the mean stay finite  for arbitrarily small $\eps>0$? \\

%{\bf The texture of influence percolation.}
%A feature about which one may inquire  in relation to the influence percolation picture is the corresponding  \emph{hole probability}: for the ground state of the random Hamiltonian in an $L\times L$  square domain, what is the order of magnitude of the probability that a concentric ball of radius $\ell$ is completely unaffected by the boundary conditions?   In particular, does this  probability  tend to $1$ in the limit $L\to \infty$ with  $\ell/L \to  \alpha \in (0,1)$? Does it at least stay uniformly bounded away from~$0$? \\

{\bf RFIM with other random field distributions.}  Our analysis  focused  on IID Gaussian disorder.
In contrast, the theorem of~\cite{AW89,AW90} applies to a wide class of random field distributions.
The Gaussian structure allowed a short-cut in the proof of Theorem~\ref{thm:T2}.
While we expect the results to be valid also well beyond this case,  that is not done here.

Among the other distributions of interest are:
\begin{enumerate}
\item  A dilute coercive field, with $(\eta_v)$ given by independent random variables with $\P(\eta_v=-\infty)=\P(\eta_v=\infty)=\eps$ and $\P(\eta_v = 0) = 1-2\eps$.   \\
This distribution was considered in \cite{DS84} where an observation was initially made  suggesting the possibility of a transition from exponential to  power-law decay of correlations at low $\eps$.  (However, subsequent considerations have weakened the case for that, cf. also the discussion in \cite{BK88}).\\

\item Bounded variables, e.g. with  $(\eta_v)$ independent and uniformly distributed   in $\{-1,1\}$ or $[-1,1]$.
The former is of particular relevance for  the case of $Q$-state Potts models with random couplings, for which $\sigma_v$ takes values in $\{1,..., Q\}$ and the Hamiltonian is:
\be
H_\eta (\sigma) \  =\  - \sum_{\{x,y\} \in E(\Z^2)} \left(J + \eps \eta_{x,y}\right)
\1[\sigma_x = \sigma_y ]  \,  .
\ee
The uniform bound on $|\eta|$ allows to keep the discussion separate from that of frustration effects. \\
\end{enumerate}

{\bf   The more general Imry-Ma phenomenon.}
While the RFIM is a bellwether for the more general Imry-Ma phenomenon,  the general case
is a bit more complicated on two accounts.  The first is the lack of a-priori obvious pair of opposing boundary conditions for the definition of the order parameter.   That can be addressed, as was done in \cite{AW90},  by inducing the $\pm$  states not through boundary conditions but throughout a mild shift of the uniform field beyond the corresponding boundary of the region under study, $h\to h\pm \delta h$ with $\delta h$ in the range
\be
  |\Lambda(\ell)|^{-1}\ll \delta h \ll 1 \,  \quad \mbox{ (as $ \ell \to \infty $)}  \,.
  \ee
(An alternative is to define the order parameter though a maximization of the difference induced by different boundary spin configurations.) A potentially more substantial difference with the RFIM, is that in the general case the natural order parameter does not control the difference in the configurations, or measures, just in their (generalized) magnetizations.  The resolution of this  complication may require some new technical  ideas.  \\

 \appendix

\section{Exponential decay at high  disorder}\label{sec:high disorder}

As a rule of thumb it is generally expected that at high enough disorder, be it thermal or due to noisy environment, correlations decay exponentially fast.  Results in this vein for systems related to the RFIM can be found in the works of A. Berretti~\cite{Ber85}, J. Imbrie and J. Fr\"ohlich~\cite{ImbFro85}, and F.  Camia, J.  Jiang and C.M. Newman~\cite{CJN18}.

Let us present here an especially simple proof of such behavior for the $T=0$ case, i.e. exponential decay of the correlations of the RFIM's ground state, and also of the principle that fast enough power-law decay implies exponential decay.

\begin{thm}  For the RFIM on $\Z^d$ with the nearest-neighbor interaction~\eqref{J_nn} and random field given by IID random variables $(\eta_u)$,  if
\be \label{large_eps}
\Pr\big(|h+\eps \eta_\0| \leq 2d J \big) \ < \ p_c(d) \,
\ee
with $p_c(d)$ the critical density for site percolation on $\Z^d$,
then $m(L;0,\mathcal J, h,\eps)$ decays exponentially fast in $L$.
\end{thm}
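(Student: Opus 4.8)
The plan is to combine an elementary \emph{pinning} observation, which converts the hypothesis into a statement about site percolation, with the sharpness of the percolation phase transition.

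\emph{Step 1 (pinning).} Call a vertex $v\in\Z^d$ \emph{free} (for the given $h,\eps$ and field realization) if $|h+\eps\eta_v|\le 2dJ$ and \emph{pinned} otherwise; by \eqref{J_nn} one has $\sum_{u\sim v}J_{u,v}=2dJ$, so $\Pr(v\text{ free})=\Pr(|h+\eps\eta_\0|\le 2dJ)<p_c(d)$, and the events $\{v\text{ free}\}$ are independent. I claim that for \emph{every} finite $\Lambda\subset\Z^d$, every boundary configuration $\tau$, and every minimizer $\sigma^{\Lambda,\tau}$ of $H^{\Lambda,\tau}$, one has $\sigma^{\Lambda,\tau}_v=\mathrm{sign}(h+\eps\eta_v)$ at each pinned $v\in\Lambda$. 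Indeed, comparing $\sigma^{\Lambda,\tau}$ with the configuration obtained by flipping only the spin at $v$, minimality forces $\sigma^{\Lambda,\tau}_v\cdot\big(\sum_{u\sim v}J_{u,v}\widehat\sigma_u+h+\eps\eta_v\big)\ge 0$, where $\widehat\sigma_u\in\{-1,1\}$ is the value $\sigma^{\Lambda,\tau}_u$ if $u\in\Lambda$ and $\tau_u$ otherwise; since $\big|\sum_{u\sim v}J_{u,v}\widehat\sigma_u\big|\le 2dJ<|h+\eps\eta_v|$, the sign of the parenthesis equals that of $h+\eps\eta_v$.

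\emph{Step 2 (disagreement percolates along free vertices).} Set $D_L:=\{v\in\Lambda(L):\sigma^{\Lambda(L),+}_v\neq\sigma^{\Lambda(L),-}_v\}$, so that $m(L;0,\mathcal J,h,\eps)=\Pr(\0\in D_L)$ by \eqref{eq:p_L_def}. By Step 1, $D_L$ consists only of free vertices, since a pinned vertex carries the common value $\mathrm{sign}(h+\eps\eta_v)$ in both ground states. Because $H^{\Lambda(L),+}$ and $H^{\Lambda(L),-}$ differ only through their boundary terms, the standard disagreement-percolation argument applies: the energy of $\sigma^{\Lambda(L),+}$ (resp.\ $\sigma^{\Lambda(L),-}$) cannot be lowered by flipping down (resp.\ up) a connected component of $D_L$ that does not touch $\partial_{\text{e}}\Lambda(L)$, and these two modifications change the two energies by equal and opposite amounts; using the a.s.\ uniqueness of the ground state, this forces the component of $D_L$ containing $\0$ to reach a vertex of $\Lambda(L)$ adjacent to its complement, i.e.\ a vertex at distance $L$ from $\0$. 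Hence
\begin{equation}
m(L;0,\mathcal J,h,\eps)\ \le\ \Pr\big(\0\text{ is joined to }\{v:d(\0,v)=L\}\text{ by a path of free vertices}\big).
\end{equation}

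\emph{Step 3 (conclusion) and the main obstacle.} The free vertices form a Bernoulli site percolation on $\Z^d$ with density $<p_c(d)$, i.e.\ strictly subcritical; by the sharpness of the percolation transition (Menshikov, Aizenman--Barsky) there are constants $C<\infty$, $c>0$, depending only on $d$ and on $\Pr(v\text{ free})$, with $\Pr(\0\leftrightarrow\{d(\0,\cdot)=L\}\text{ in the free set})\le Ce^{-cL}$, whence $m(L;0,\mathcal J,h,\eps)\le Ce^{-cL}$. The only point requiring care, rather than bookkeeping, is the disagreement-percolation step when ground states fail to be unique (field laws with atoms); as throughout the paper this is dealt with by working with the pointwise-extremal ground states $\sigma^{\Lambda(L),\pm}$ and the sublattice structure of the set of minimizers of $H^{\Lambda(L),\tau}$, and I would not belabor it. Everything else — the threshold $2dJ$ in Step 1 and the reduction to a connectivity event in Step 2 — is elementary.
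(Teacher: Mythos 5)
Your proposal is correct and follows essentially the same route as the paper's proof: the pinning observation that $|h+\eps\eta_v|>2dJ$ forces $\sigma^{\Lambda,\tau}_v=\mathrm{sign}(h+\eps\eta_v)$ for any boundary condition, the reduction of disagreement percolation to connectivity in the Bernoulli site process of ``free'' vertices, and the appeal to sharpness of the percolation transition (Aizenman--Barsky, Menshikov). Your Steps~1 and 2 merely spell out, in more detail, what the paper summarizes in two sentences.
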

\begin{proof}
At sites where $|h+\eps \eta_v| > 2d J$  the ground-state configuration is dictated by the sign of the local field.  Hence disagreement percolation can propagate only along the sites with $|h+\eps\eta_v| \le  2d J$.
In the regime described by \eqref{large_eps} the exceptional sites form a sub-percolating point process, for which the connectivity probability is known to decay exponentially in the distance~\cite{AB87,Men86}.
\end{proof}

A boosted version of the above  simple argument allows to conclude that if on some scale $\ell $ the probability of influence propagation is small enough ($1/\ell^{d-1}$ power law with a small prefactor) then on larger scales the influence decays exponentially fast.
An analogous statement holds also for $T>0$, but for simplicity of presentation we present the proof for $T=0$.

\begin{thm} \label{thm:exp1}
For the RFIM on $\Z^d$ with the nearest-neighbor interaction~\eqref{J_nn}, there is a finite constant $c_0$ (depending only on $d$) with which: if for some $\ell < \infty$
\be
m(\ell;0,\mathcal J, h,\eps) \, \leq \,  c_0  / \ell^{d-1}\label{eq:one_over_ell_bound}
\ee
then for all $L <\infty$
\be
m(L;0,\mathcal J, h,\eps) \, \leq \,  C_1 \,  e^{- b  L/ \ell}
\ee
with  $C_1, b \in (0,\infty)$ which do not depend on $J$, $h$, $\epsilon$ and $\ell$.
\end{thm}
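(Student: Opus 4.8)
The plan is a coarse-graining (renormalization) argument: the hypothesis~\eqref{eq:one_over_ell_bound} will make a suitably defined block at scale $\ell$ ``bad'' only with small probability, the bad-block field is finite-range dependent, and hence its connectivities — which dominate $m(L)$ — decay exponentially in $L/\ell$.

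\emph{Step 1 (disagreement-percolation structure).} First I would record the standard fact that, almost surely, for every finite $\Lambda\subseteq\Z^d$ the disagreement set $D^{\Lambda}:=\{v\in\Lambda:\sigma^{\Lambda,+}_v\neq\sigma^{\Lambda,-}_v\}$ has the property that each of its connected components reaches the boundary, i.e.\ contains a vertex adjacent to $\extB\Lambda$. This follows from the a.s.\ uniqueness of finite-volume ground states: were a component $C$ entirely interior, flipping $C$ inside $\sigma^{\Lambda,-}$, and separately inside $\sigma^{\Lambda,+}$, would give (by minimality) two non-strict energy inequalities whose sum vanishes — because $\sigma^{\Lambda,+}$ and $\sigma^{\Lambda,-}$ agree on the vertices bordering $C$ — forcing both to be equalities, contradicting uniqueness. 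Combined with the monotonicity~\eqref{eq:disag_mon} (so $v\in D^{\Lambda_2}$ and $v\in\Lambda_1\subseteq\Lambda_2$ give $v\in D^{\Lambda_1}$), one obtains: if $\sigma^{\Lambda(L),+}_{\0}\neq\sigma^{\Lambda(L),-}_{\0}$ then $D^{\Lambda(L)}$ contains a nearest-neighbor path from $\0$ to $\extB\Lambda(L)$.

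\emph{Step 2 (renormalized blocks and the key probability estimate).} Fix a large integer $A=A(d)$ (with $A>2\sqrt d$), set $N:=A\ell$ and tile $\Z^d$ by cubes $Q_z:=Nz+\{0,\dots,N-1\}^d$. Around each $Q_z$ take the enlarged ball $W_z:=\Lambda_{c_z}(R)$ with $R:=A^2\ell$ and $c_z$ the center of $Q_z$, chosen so that $Q_z$ lies deep inside $W_z$ and $\Lambda_v(\ell)\subseteq W_z$ for all $v\in Q_z$. Declare $Q_z$ \emph{bad} if $D^{W_z}\cap Q_z\neq\emptyset$; this event is measurable with respect to $\eta$ restricted to $W_z$, so bad-events are independent once their $W$-boxes are disjoint, i.e.\ once the renormalized centers differ by more than $2A$. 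I would then prove the estimate $\P(Q_z\text{ bad})\le C(d)\,A^{2(d-1)}\,c_0=:q$: by Step~1, $D^{W_z}\cap Q_z\neq\emptyset$ forces a disagreement path inside $W_z$ from $Q_z$ out to the boundary of $W_z$, which must contain a vertex $w$ with $d(c_z,w)=\lceil R/2\rceil$ whose $D^{W_z}$-cluster has diameter at least $R/2-O(1)$; restricting to $\Lambda_w(R/2-O(1))\subseteq W_z$ and using~\eqref{eq:disag_mon} gives $w\in D^{\Lambda_w(R/2-O(1))}$. Summing over the $\le C(d)R^{d-1}$ such $w$ and using $\P\big(w\in D^{\Lambda_w(r)}\big)=m(r)\le m(\ell)\le c_0/\ell^{d-1}$ for $r=R/2-O(1)\ge\ell$ yields the bound. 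The decisive point is that one unions over the $\Theta(R^{d-1})=\Theta(\ell^{d-1})$ vertices of a boundary shell, not over the $\Theta(N^d)=\Theta(\ell^d)$ vertices of $Q_z$ — this is exactly what keeps $q$ of order $c_0$, uniformly in $\ell$.

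\emph{Step 3 (from bad blocks to exponential decay).} If $\sigma^{\Lambda(L),+}_{\0}\neq\sigma^{\Lambda(L),-}_{\0}$, follow the path of Step~1 until it first reaches distance of order $L$ from $\0$; every renormalized block it has visited then has its $W$-box inside $\Lambda(L)$ and is therefore bad, and these visited blocks form a nearest-neighbor connected subset of the renormalized lattice, containing the origin block and reaching renormalized distance of order $L/(A\ell)$. Hence there is a self-avoiding renormalized path $(z_0=0,z_1,\dots,z_T)$ with $T\ge c_d L/\ell$ all of whose blocks are bad; a greedy selection along it extracts $k\ge c_d' L/\ell$ indices pairwise at renormalized distance $>2A$, whose bad-events are independent, so for a fixed path the probability that all its blocks are bad is at most $q^{k}$. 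Summing over the at most $(2d)^{T}$ such paths,
\begin{equation*}
  m(L)\ \le\ \sum_{T\ge c_d L/\ell}(2d)^{T}\,q^{\,c_d' L/\ell}\ \le\ C\,\big((2d)^{c_d}\,q^{\,c_d'}\big)^{L/\ell},
\end{equation*}
and choosing $c_0=c_0(d)$ small enough that the base is below $1/2$ gives $m(L)\le C_1 e^{-bL/\ell}$ for large $L$, hence (adjusting $C_1$) for all $L$, with $C_1,b$ depending only on $d$. (An alternative to the Peierls count is to invoke Liggett--Schonmann--Stacey domination of a finite-range-dependent field of small marginal by a Bernoulli field, then the exponential decay of connectivities in subcritical Bernoulli percolation on the renormalized lattice.)

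\emph{Main obstacle.} The only real subtlety is the estimate in Step~2: the naive union bound over the $\Theta(\ell^d)$ sites of $Q_z$ gives merely $\P(Q_z\text{ bad})=O(c_0\,\ell)$, which is useless once $\ell$ is large. Exploiting the disagreement-cluster structure of Step~1 to reduce to a union over the $\Theta(\ell^{d-1})$ sites of the boundary shell of $W_z$, each contributing $m(\cdot)\le c_0/\ell^{d-1}$, is what produces a bound uniform in $\ell$ and thereby makes the renormalization close. Everything else — the structure lemma, the path count, the extraction of an independent sub-collection of blocks, and the trivial treatment of small $L$ — is routine.
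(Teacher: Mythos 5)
Your proposal is correct and follows the same coarse-graining strategy as the paper's proof: partition $\Z^d$ into blocks at scale $\approx\ell$, bound the bad-block probability by $O(c_0)$ uniformly in $\ell$, observe finite-range dependence of the bad-block field, and close with a Peierls count (or Liggett--Schonmann--Stacey domination and subcritical percolation). Where you genuinely add something is the bad-block probability estimate. The paper declares a block $Q$ of side $2\ell$ bad if it contains \emph{some} site $v$ with $\sigma^{\Lambda_v(\ell),+}_v\neq\sigma^{\Lambda_v(\ell),-}_v$ and asserts this probability is ``trivially dominated by $|\extB\Lambda(\ell)|\times m(\ell)$''; but a direct union bound over the $(2\ell)^d$ sites of $Q$ gives only $(2\ell)^d m(\ell)\sim c_0\,\ell$, which diverges. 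Your Steps~1 and~2 --- the structure lemma that disagreement components must reach the boundary, the enlarged halos $W_z=\Lambda_{c_z}(A^2\ell)$, and the resulting reduction to a union over $\Theta(\ell^{d-1})$ shell sites, each sensitive at a scale $\ge\ell$ via~\eqref{eq:disag_mon} --- supply a correct justification of the $\ell^{d-1}m(\ell)$ bound that the argument requires and are exactly the content behind the paper's one-line claim. A slightly lighter fix, closer to what the paper's $|\extB\Lambda(\ell)|$ factor hints at, is to declare $Q$ bad only when a site of the inner boundary of $Q$ is sensitive at distance $\ell$: the union bound is then honestly over $\Theta(\ell^{d-1})$ sites, and the path of sensitive-at-distance-$\ell$ sites from $\0$ to $\extB\Lambda(L-\ell)$ necessarily crosses the boundary of every block it visits, so all visited blocks are bad in this weaker sense.
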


In particular, we learn that   $m(L;0,\mathcal J, h,\eps)$ cannot decay by a power law faster than $1/L$ without decaying  exponentially.

\begin{proof}
In the following we say that a site $v\in \Z^2$ is sensitive to
boundary conditions at distance $\ell$ if
\be \label{event}
\sigma^{\Lambda_v(\ell),+}_v  \neq \sigma^{\Lambda_v(\ell),-}_v  \,. \ee

For each $L> \ell$
the event whose probability defines $  m(L) $,
\be
\langle\sigma_\0\rangle^{\Lambda(L), +}  \ \neq  \       \langle\sigma_\0\rangle^{\Lambda(L), -}  \, ,
\ee
requires the existence of a path  from $\0$  to the set $\extB \Lambda(L-\ell)$   along
sites $v\in \Z^2$  at which the  condition \eqref{event} holds.

Let now $\mathcal {P}_\ell$ be a partition of the vertex set of $\Z^d$ into a $\Z^d$-like array of disjoint cubic blocks of side length  $2\ell$, and consider the random set of blocks in this partition which  contain at least one site for which  \eqref{event} holds.
These block events are  1-step independent, in the sense that they are jointly independent for any collection of blocks of which no two are touching.

The probability that an individual  block contains a site at which the  condition \eqref{event} holds is trivially dominated by $|\extB \Lambda(\ell)| \times m(\ell)$.  Adjusting the  constant $c_0$ in assumption~\eqref{eq:one_over_ell_bound} the above probability
can be made as small as convenient.
The claim then follows through a standard exponentially-decaying bound on the connectivity probability in 1-step independent percolation of small enough density.
 \end{proof}

\medskip

\section{The Mandelbrot percolation analogy} \label{app:man} \mbox{}\\[-1ex]
The results presented above do not answer the question whether in two dimensions the exponential decay of correlations persists into arbitrarily small values of the disorder parameter, or whether the exponential decay turns into a power-law decay at low enough (but still non zero) $\eps$. Related to this is the question of what  would be a sensible algorithm for the computation of the  ground state $\widehat \sigma$ for a given random field,  and how would it perform  at very low disorder.

An intriguing perspective is provided by the following hierarchal algorithm. It has the virtue of simplicity but also the drawback of being potentially misleading through over simplification. It is formulated for the specific case $h=0$ and nearest-neighbor interaction.

Let $(\mathcal P_n)$, $n\ge 0$, be a sequence of nested partitions of $\Z^2$ into square blocks, with the blocks in $\mathcal P_n$ having side-length $3^n$ and the square containing $x$ denoted by $D_{n,x}$.  For each  $n, x$  we define the following as a \emph{large-field} event in $D_{n,x}$:
\be
\mathcal{F}_{n,x} \ :=  \ \{ \eta \, : \ \eps\,  \big|\eta(D_{n,x}) \big|   \ > \  J \, |\partial_{\text{e}} D_{n,x}| \} \,.
\ee
where $\eta(D) :=  \sum_{x\in D} \eta_x $ is the total block field.

A relevant feature of two dimensions is that the probabilities of the  large-field events are  scale invariant:
\be\label{eq:p_Mandelbrot_def}
\P(\mathcal F_{n,x}) \  = \  \chi(4J/\eps)  \ := p \  \approx \exp[-8J^2/\eps^2] \, .
\ee
For a given $x\in \Z^2$ the  events $\mathcal{F}_{n,x} $ are not strictly independent, however the sequence (in $n$) of the corresponding indicator functions is easily seen to be  asymptotic, in probability, to a stationary and mixing sequence of random variables.

Let $n(x; \eta)$ be the first non-negative integer for  which large field is exhibited in  $D_{n,x}$.
Due to the above properties of the events $\mathcal{F}_{n,x} $  for any $J,\eps>0$  almost surely  $n(x; \eta) < \infty$  for all $x$.

Under  $\mathcal F_{0,x}$, i.e. in case the large-field event occurs at $x$  already on the smallest scale, the value of the ground-state configuration at $x$ is predictably given by $\rm{sign} (\eta_x)$, i.e.  the sign of the field.   In case the $\eta_x$ is itself not large enough to meet this criterion, but the site is separated from the boundary of a set $\Lambda$ by a loop of sites  for which the large-field events occur at scale $n=0$, one may still conclude that  the finite-volume ground state at $x$ does not depend on the boundary spin configuration $\sigma_{\extB \Lambda}$.

Scaling up these observations, though along the way departing from rigor, we arrive at the following somewhat over-simplified algorithm for the assignment of a  spin configuration $\tau(\eta)$ which may mimic the infinite-volume ground state $\widehat\sigma(\eta)$.

For each $x\in \Z^2$ let $k(x; \eta) $ be defined as the smallest $0\le k< n(x; \eta)$ for which $x$ is separated from infinity by a loop of sites with $n(x; \eta) \leq k$, if such a $k$ exists, and otherwise set $k(x; \eta)=n(x; \eta)$.

In the first case, i.e. $n(x; \eta) =   k(x;\eta)$, we let   $\tau(\eta)_x = \rm{sign} (\eta(D(x))$.
If  $  k(x;\eta) < n(x; \eta) $, the value of   $\tau(\eta)_x$ is determined by minimizing the RFIM energy over the interior of the corresponding $x$-encapsulating loop, with the previously constructed values serving as boundary conditions for $\tau(\eta)_x$.

For the finite-volume version of the construction, in $\Lambda \subset \Z^2$, the above construction is modified by limiting the considerations of large-field events to cubes contained in $\Lambda$.  In the last step, unless $\tau^{\Lambda,\pm}(\eta)_x$ is defined  already through such events, its calculation will incorporate the boundary conditions imposed at $\extB \Lambda$.

Under the above algorithm the influence of the boundary conditions on $\tau(\eta)_x$ percolates over sites for which the events $\mathcal F_{n,x}$ did not yet occur.   For an idea on the  probability that the influence percolates deep inside $\Lambda$ one may take the further approximation in which the correlations between the indicator functions of nested events $\mathcal{F}_{n,x} $ are ignored.

Under the latter approximation, the collection of sites not covered by any of the  large-field events, has the distribution of the random fractal set discussed in Mandelbrot's ``canonical curdling'' model~\cite{M82}.
In particular,  the  influence-percolation process  coincides with the Mandelbrot-percolation process at density $p$ given by \eqref{eq:p_Mandelbrot_def}.

Curiously, as was proven by Chayes-Chayes-Durrett ~\cite{CCD88}, the Mandelbrot-percolation process does undergo a phase transition.   Its manifestation in the lattice version of the model is that the connectivity function decays exponentially fast for  $p$ large enough, but at $p$ small   the decay changes to a power law.  (The model is most appealing in its continuum, or ``ultraviolet'', limit while our discussion is focused on its infinite-volume, or ``infrared'', limit.  However in the analysis there is a simple relation  between the two).

It should however be noted that for  the finite-volume version of the construction, the existence of a path connecting $x$ to $\extB \Lambda$ in the complement of the set of sites covered by large-field events  is only a necessary condition for the dependence of $\tau^{\Lambda,\pm}(\eta)_x$ on the boundary conditions.
As its value   is determined through the energy minimization conditioned on  both the $\pm$ boundary conditions and the randomly determined values along the large-field sets, the $\pm$ boundary conditions may lose their effect on $\tau(\eta)_x$  even before the geometric disconnection of $x$ from $\extB \Lambda$.   Thus the Mandelbrot-percolation's  phase transition does not preclude exponential decay of the $\tau$-analog of our finite-volume order parameter at all $p>0$.

\medskip

\section*{Acknowledgements}
The work of MA was supported in part by the NSF grant DMS-1613296 and the Weston
Visiting Professorship at the Weizmann Institute. The work of RP was supported in part by Israel Science Foundation grant 861/15 and the European Research Council starting grant 678520 (LocalOrder).
We  thank the Faculty of Mathematics and Computer Science and the Faculty of Physics at
WIS for the hospitality enjoyed there during work on this project.

\end{document}